\documentclass[english,11pt,a4paper]{article}
\usepackage[T1]{fontenc}
\usepackage[latin9]{inputenc}
\usepackage{geometry}
\usepackage{babel}
\usepackage{hyperref}
\usepackage{enumitem}

\usepackage{amsmath, amssymb, amstext, mathtools}
\usepackage{amsthm}

\usepackage{tikz}
\usepackage{pgfplots}
\pgfplotsset{compat=1.18}
\usepackage{textpos}

\usepackage[noabbrev,capitalize]{cleveref}

\definecolor[named]{urlblue}{cmyk}{1,0.58,0,0.21}

\hypersetup{
    breaklinks=true,
    colorlinks=true,
    citecolor=purple!70!blue!60!black,
    linkcolor=purple!70!blue!90!black,
    urlcolor=urlblue,
    pdflang={en},
    pdftitle={Polynomial-Time Isomorphism Tests for Hereditary Classes of Tournaments},
    pdfauthor={Daniel Neuen}
}

\usetikzlibrary{backgrounds,patterns,arrows,decorations.pathreplacing,decorations.pathmorphing,calc}

\tikzstyle{vertex}=[draw,circle,fill=white,minimum size=7pt,inner sep=0pt]

\newtheorem{theorem}{Theorem}[section]
\newtheorem{lemma}[theorem]{Lemma}

\newtheorem{observation}[theorem]{Observation}

\newtheorem{claim}[theorem]{Claim}
\newtheorem{fact}[theorem]{Fact}
\theoremstyle{definition}

\theoremstyle{remark}

\newenvironment{claimproof}{\begin{proof}}{\end{proof}}

\newcommand{\CC}{{\mathcal C}}

\newcommand{\CE}{{\mathcal E}}

\newcommand{\CM}{{\mathcal M}}

\newcommand{\CP}{{\mathcal P}}
\newcommand{\CQ}{{\mathcal Q}}

\newcommand{\NN}{{\mathbb N}}

\newcommand{\RR}{{\mathbb R}}

\newcommand{\C}{{\mathbb C}}

\newcommand{\WL}[2]{\chi^{#1,#2}}
\newcommand{\WLit}[3]{\chi_{(#2)}^{#1,#3}}

\newcommand{\trans}{{\mathsf{T}}}

\newcommand{\Alg}{{\mathbb A}}

\DeclareMathOperator{\im}{im}
\DeclareMathOperator{\tr}{tr}

\DeclareMathOperator{\Iso}{Iso}
\DeclareMathOperator{\Aut}{Aut}
\DeclareMathOperator{\Sym}{Sym}

\newcommand{\orcid}[1]{\href{https://orcid.org/#1}{\includegraphics[height=1.8ex]{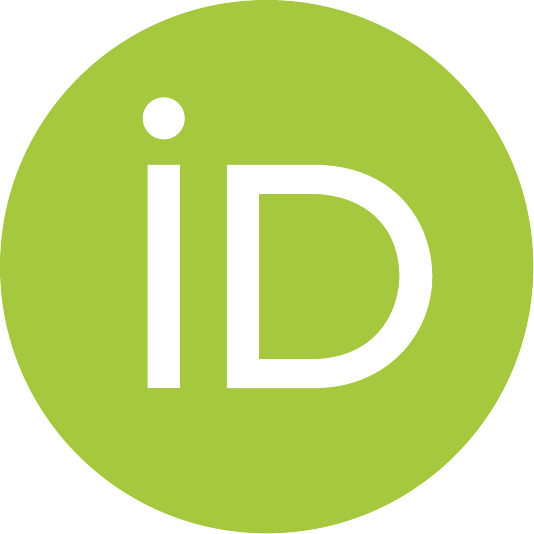}}}

\title{FPT Isomorphism Test for $F$-Free Tournaments}
\author{Daniel Neuen \orcid{0000-0002-4940-0318}\\
TU Dresden\\
Dresden, Germany}

\date{}

\begin{document}

\maketitle

\begin{abstract}
 We show that isomorphism of $F$-free tournaments can be solved in FPT time $f(k) \cdot n^{O(1)}$, where $k$ denotes the size of $F$, and $n$ denotes the size of the input tournaments.
 Our result extends on a previous FPT isomorphism test for tournaments of bounded twin-width [Grohe, Neuen 2024], as well as XP isomorphism tests parameterized by the VC dimension or the chromatic number [Ra\ss mann, Schweitzer 2026].
 It also implies that every non-trivial hereditary class of tournaments admits a polynomial-time isomorphism test.

 Our algorithm builds on a novel combination of spectral, geometric, combinatorial and group-theoretic tools.
\end{abstract}

\section{Introduction}

The tournament isomorphism problem (TI) was recognized as a particularly interesting special case of the graph isomorphism problem (GI) early-on.
Already in 1983, Babai and Luks~\cite{BabaiL83} proved that TI can be solved in quasipolynomial time $n^{O(\log n)}$;
it took 33 more years for Babai \cite{Babai16} to prove that the general GI can also be solved in quasipolynomial time.
An important fact that makes TI more accessible than GI is that every tournament has a solvable automorphism group.
This follows from the fact that the automorphism group of every tournament has odd order and the famous Feit-Thompson Theorem~\cite{FeitT63} stating that all groups of odd order are solvable.
However, even Babai's powerful machinery did not help us to improve the upper bound for TI.
But TI is not only special from a group-theoretic perspective.
Another remarkable result, due to Schweitzer~\cite{Schweitzer17}, states that TI reduces (under randomized polynomial-time reductions) to the problem of deciding whether a tournament has a nontrivial automorphism; the so-called \emph{rigidity problem}.
It is an open question whether the same holds for general graphs.

While there is an extensive literature on GI restricted to classes of graphs (see \cite{GroheN21,Neuen26} for recent surveys), restrictions of TI only started to receive more attention recently.
Ponomarenko \cite{Ponomarenko92} proved that TI is in polynomial time for tournaments whose automorphism group contains a regular cyclic subgroup,
and more recently Arvind, Ponomarenko, and Ryabov~\cite{ArvindPR25} proved that TI is in polynomial time for edge-colored tournaments where at least one edge color induces a (strongly) connected spanning subgraph of bounded degree (even fixed-parameter tractable when parameterized by the out-degree).
While both of these results are very interesting from a technical perspective, they consider fairly specialized classes of tournaments that would hardly be called natural from a graph-theoretic point of view.
Looking at natural graph parameters, Grohe and the present author~\cite{GroheN26} proved that TI can be solved in time $k^{O(\log k)}n^{O(1)}$ for tournaments of twin-width at most $k$.
Also, Ra{\ss}mann and Schweitzer \cite{RassmanS26} proved that TI for tournaments of VC dimension at most $d$ can be solved in time $n^{O(d \log d)}$, and TI for $k$-colorable tournaments can be solved in time $n^{O(k)}$.
Here, a tournament is $k$-colorable if the vertices can be colored with $k$ colors in such a way that every color class induced a transitive (i.e., acyclic) tournament.
We note that the VC dimension of a tournament is bounded in terms of the twin-width, i.e., classes of bounded twin-width also have bounded VC dimension.
As such, Ra{\ss}mann and Schweitzer \cite{RassmanS26} cover more general classes of tournaments, but they only obtain XP algorithms whereas for tournaments of twin-width $k$, the algorithm from \cite{GroheN26} is an FPT algorithm.

All three of these parameters $p$ (twin-width, VC dimension, chromatic number) have in common that they are \emph{hereditary}, i.e., for an induced subtournament $S$ of a tournament $T$, we have that $p(S) \leq p(T)$.
As the main result of this work, we generalize those results and obtain an FPT isomorphism for all non-trivial hereditary classes of tournaments.
More precisely, for a tournaments $F$ and $T$, we say that $T$ is \emph{$F$-free} if $T$ has no induced subtournament isomorphic to $F$.

\begin{theorem}
 \label{thm:main}
 The isomorphism problem for $F$-free tournaments can be solved in time $2^{2^{O(k^2)}} \cdot n^{O(1)}$ where $k \coloneqq |V(F)|$.
\end{theorem}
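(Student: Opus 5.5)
The plan is a recursive decomposition: either refine the input tournaments until the refinement is very structured, or find a small set of vertices whose individualization splits the tournament into parts of bounded size. The abstract's mention of spectral, geometric, combinatorial and group-theoretic tools suggests exactly this mix.

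\textbf{Step 1: Refine and analyze the colour classes.} I would first apply colour refinement (or $2$-dimensional Weisfeiler--Leman) to both tournaments. This yields a coherent configuration in which every colour class is a regular tournament and every pair of colour classes induces a biregular bipartite "direction" pattern. The key combinatorial input would be an Erd\H{o}s--Hajnal / Alon--Pach--Solymosi-type statement for $F$-free tournaments. Regular $F$-free tournaments on many vertices are far from quasirandom: their adjacency matrix $A$ satisfies $A + A^{\trans} = J - I$. The matrix $A - \tfrac12(J-I)$ is therefore skew-symmetric, and $F$-freeness should force it to have a large eigenvalue, of order $\varepsilon(k)\, n$. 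This is the spectral step. One would then use the top eigenvectors to embed the vertices in a bounded-dimensional space. The aim is to show that the embedding has a geometric structure: clusters and a cyclic or linear order up to a bounded error. That structure can be extracted canonically, i.e.\ isomorphism-invariantly, up to $f(k)$ choices, which yields a nontrivial canonical refinement of some colour class unless the class is tiny.

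\textbf{Step 2: Reduce to group-theoretic isomorphism.} Iterating Step 1 should give a canonical partition into parts on which the structure is controlled. I expect the parts to be either of size bounded by $g(k) = 2^{2^{O(k^2)}}$, or such that the remaining ambiguity comes from a group action with bounded-size composition factors beyond what is solvable. Since tournament automorphism groups have odd order and are therefore solvable, I would then apply Babai--Luks style group-theoretic machinery. Concretely: process the parts one at a time and compute the automorphism and isomorphism cosets. Each step involves groups in which every composition factor is cyclic or lies in $\Gamma_{g(k)}$ (the class of groups whose nonabelian composition factors are subgroups of $S_{g(k)}$). Luks' algorithm runs in time $g(k)^{O(1)}$ times a polynomial for such groups, which gives the FPT bound.

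\textbf{Main obstacle.} The hardest step is the spectral/geometric dichotomy in Step 1. One must show that every sufficiently large regular-like $F$-free piece admits an isomorphism-invariant split, with only $f(k)$ individualizations, into pieces that are either smaller by a constant factor or handled by the group machinery. The doubly exponential bound $2^{2^{O(k^2)}}$ suggests a Ramsey-type argument applied to a structure of size $2^{O(k^2)}$. My first attempt would be as follows. Show that the eigenvector embedding has $\varepsilon$-dense pairs oriented mostly one way; these can be read off canonically. Then pass to the resulting quotient tournament, which is bounded-size and can be handled by brute force over its automorphisms. The delicate point is making "mostly" exact: the few exceptional arcs would be handled by recursion on the sparse exceptional digraph, using the out-degree-bounded technique of Arvind, Ponomarenko and Ryabov.
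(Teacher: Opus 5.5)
Your spectral step matches the paper's. With $B_T=A_T-\tfrac12 J_n=\tfrac12(S_T-I_n)$, a counting lemma for $\hom(F,T)$ shows that $F$-freeness forces $\|B_T\|_2\ge 2^{-(m+1)}n$, where $m=\binom{k}{2}$. Hence $\lambda_{\max}(M_T)\ge n^2/4^m-1$, and since $\tr(M_T)=n(n-1)$, the top eigenspace has dimension $2\le d\le 4^m$. You only assert this (``should force''), but it is the easy part; the gap is in everything after it.

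\textbf{(i) Branching.} Your recursion branches, either over $n$ images for each of $f(k)$ individualized vertices or, at best, over $f(k)$ canonical choices. With pieces shrinking by a constant factor the depth is $\Theta(\log n)$, and $f(k)$-way branching per level already costs $f(k)^{\Theta(\log n)}=n^{\Theta(\log f(k))}$. That is XP, not FPT.

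\textbf{(ii) Approximate structure.} Clusters, approximate orders and pairs ``oriented mostly one way'' are thresholded objects, and you give no argument that they are isomorphism-invariant. The exceptional arcs can also give single vertices linear out-degree, so the bounded-out-degree technique of Arvind, Ponomarenko and Ryabov does not apply to them.

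\textbf{(iii) Group machinery.} Luks' algorithm for groups in $\Gamma_d$ runs in time $n^{O(d)}$, not $d^{O(1)}n^{O(1)}$. Moreover, all groups here are solvable anyway. The real difficulty is not the composition factors but producing, in FPT time, a solvable coset that contains all isomorphisms.

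The missing idea is to turn the embedding into an exact, branching-free statement about $2$-WL colours. Let $P$ be the projector onto the top eigenspace and $p_v=Pe_v$. Since $P$ is a polynomial in $M_T$, it lies in the coherent algebra, so $\|p_v-p_w\|_2$ is determined by $\WL{2}{T}(v,w)$. This settles canonicity and avoids numerics. For a $2$-WL-definable partition $\CQ$ with $|\CQ|\ge 2$, let $\delta$ be the minimum of $\|p_x-p_y\|_2$ over $x\not\sim_\CQ y$, and let $c$ be the colour of a minimizing arc.

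If $\delta>0$, the $c$-neighbours of a vertex $v$ in distinct blocks give points at distance exactly $\delta$ from $p_v$ and pairwise at least $\delta$ apart. A volume packing argument in dimension $d$ then bounds the number of such blocks by $3^d-1\le 3^{4^m}-1=D(F)$; this, not a Ramsey argument, is where $2^{2^{O(k^2)}}$ comes from. If $\delta=0$, one recurses into a class of vertices with equal projections, which is a proper subset because $d\ge 2$.

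Start from the partition into singletons, repeatedly add such a colour, and take connected components. This gives a hierarchy of $2$-WL-definable partitions ending in $\{V(T)\}$, where each level arises from the previous one via a single colour whose arcs from any vertex reach at most $D(F)$ blocks. First one reduces to $2$-WL-homogeneous tournaments via a product of solvable groups and Babai--Luks. Isomorphism cosets are then lifted level by level with the lemma of Grohe and Neuen (which extends Arvind--Ponomarenko--Ryabov) in time $D(F)^{O(\log D(F))}n^{O(1)}$. Your plan has no counterpart to this hierarchy-and-lifting mechanism, and it is exactly what makes the algorithm FPT rather than XP.
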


Note that for every non-trivial hereditary class $\CC$ of tournaments, there is some $F$ such that every graph in $\CC$ is $F$-free.
Hence, \Cref{thm:main} is a vast generalization of the polynomial-time isomorphism tests for classes of bounded twin-width, VC dimension or chromatic number.
Additionally, for the parameters VC dimension and chromatic number, our result gives the first FPT isomorphism test for tournaments.

Moreover, \Cref{thm:main} again highlights the special role of tournaments.
For general (undirected) graphs, there is a wide collection of non-trivial hereditary classes (e.g., bipartite graphs, chordal graphs, $2$-degenerate graphs, etc) for which the isomorphism problem is as hard as on general graph \cite{BoothC79} (see also, e.g., \cite{KratschS17,Schweitzer17a}).
In fact, if GI cannot be solved in polynomial time, then there are even subgraph-closed (and hence, hereditary) classes of graphs on which GI has intermediate complexity \cite{OtachiS13}.
In contrast, for tournaments, our result shows that every non-trivial hereditary class admits a polynomial-time isomorphism test.

The proof of \Cref{thm:main} relies on a combination of spectral, geometric, combinatorial and group-theoretic methods.
The starting point for the algorithm is the FPT isomorphism test for tournaments of bounded twin-width from \cite{GroheN26}.
The basic strategy of this algorithm is to proceed in roughly two phases (after some initial step that restricts to so-called $2$-WL-homogeneous tournaments; we ignore this step in the following overview).
First, the algorithm uses combinatorial methods to identity an isomorphism-invariant subset of the edges that defines a connected spanning graph of ``near-bounded degree''.
Giving those edges a special color, the second phase then solves isomorphism using group-theoretic methods which extend the algorithm from \cite{ArvindPR25}.

The algorithm for \Cref{thm:main} follows the same strategy, and in fact is almost identical to the algorithm from \cite{GroheN26}.
However, the key difference lies in the analysis, and in particular how to prove the existence of an isomorphism-invariant subset of the edges that defines a connected spanning graph of ``near-bounded degree''.
More concretely, to identify the desired subset of edges, both algorithms rely on a standard combinatorial graph isomorphism heuristic, the $2$-dimensional Weisfeiler-Leman algorithm (see, e.g., \cite{Kiefer20,Kiefer20a} for background), which colors pairs of vertices in an isomorphism-invariant manner.
Now, the key step is to prove the existence of a subset of $2$-WL colors $C$ such that taking all edges with a color from $C$ defines a subgraph of ``near-bounded degree''.
While this step is fairly straight-forward for graphs of bounded twin-width, it turns out to be significantly more challenging for $F$-free tournaments.

In a nutshell, to prove the existence of the desired set $C$, we proceed as follows.
First, for an $F$-free tournament $T$, we consider the matrix
\[M \coloneqq (A - A^\trans)(A^\trans - A)\]
where $A$ is the adjacency matrix of $T$, and $A^\trans$ denotes the transpose.
This matrix is symmetric and positive semidefinite, i.e., all eigenvalues are non-negative.
We first show that the eigenspace of the largest eigenvalue has a dimension $2 \leq d \leq K(F)$ where $K(F)$ is some parameter that only depends on $F$.
Here, we also make use of arguments that are inspired by the study of quasi-random tournaments \cite{ChungG91,KalyanasundaramS13}.

Now, consider the projector $P \in \RR^{n \times n}$ onto the eigenspace of the largest eigenvalue of $M$.
For each vertex $v \in V(T)$, we consider the point
\[p_v \coloneqq Pe_v\]
where $e_v \in \RR^{n}$ denotes the standard basis vector for position $v$, i.e., $(e_v)_v = 1$ and $(e_v)_w = 0$ for all $w \neq v$.
Crucially, it is not difficult to show that the Euclidean distance $\|p_v - p_w\|_2$ between two such points is preserved by every isomorphism.
Now, let
\[\delta \coloneqq \min_{v \neq w} \|p_v - p_w\|_2\]
denote the minimum distance between two such points.
Let us assume that $\delta > 0$ (for $\delta = 0$, we use a simple inductive argument which exploits that $d \geq 2$).
Now, to obtain a bounded-degree subgraph, we connect $v$ to all vertices $w$ such that $\|p_v - p_w\|_2 = \delta$.
The number of such neighbors is bounded, since every point $p_v$ lies in the eigenspace of the largest eigenvalue, which has bounded dimension $d$.
Indeed, since all points have pairwise distance at least $\delta$, a simple geometric packing argument shows that, for every $v \in V(T)$, there is only a bounded number of vertices $w \in V(T)$ such that $\|p_v - p_w\|_2 = \delta$.
This way, we obtain a bounded-degree subgraph $D$.
Let us stress that, to avoid any numerical issues, our algorithm does not compute any of the vectors $p_v$.
Instead, we build on known arguments \cite{Furer95,Furer10,RattanS23} to show that the value $\|p_v - p_w\|_2$ is ``encoded'' in the $2$-WL colors.

Unfortunately, the graph $D$ may not yet be connected (which we require for the second phase to work).
To fix this, we iteratively add further edges to $D$ as follows.
We repeat the above procedure, but define $\delta$ as the minimum distance between points associated with vertices from different components of the current graph $D$.
Now, the geometric packing argument yields that, for every $v \in V(T)$, there is only a bounded number of components that $v$ has a neighbor in.
This results in a graph of ``near-bounded degree'' which is sufficient for our purposes.

\section{Preliminaries}

\subsection{Basics}

We write $\NN = \{0,1,2,3,\dots\}$ for the set of natural numbers and $[n] \coloneqq \{1,\dots,n\}$ for $n \in \NN$.
Let $V$ be a finite set and let $\CQ$ be a partition of $V$.
We write $\sim_\CQ$ for the unique equivalence relation such that the partition into equivalence classes equals $\CQ$.
For $X \subseteq V$, we write $\CQ \cap X \coloneqq \{Q \cap X \mid Q \in \CQ, Q \cap X \neq \emptyset\}$ for the partition induced on $X$.
For a coloring $\chi\colon V \to C$, where $C$ is some set of ``colors'', we write $\CP_\chi$ for the partition of $V$ into color classes.
For a second coloring $\chi'\colon V \to C$, we write $\chi' \preceq \chi$ if $\CP_{\chi'}$ is finer than $\CP_\chi$, i.e., for every $P' \in \CP_{\chi'}$ there is some $P \in \CP_{\chi}$ such that $P' \subseteq P$.
We say that $\chi$ is \emph{equivalent} to $\chi'$, denoted $\chi \equiv \chi'$, if $\CP_{\chi} = \CP_{\chi'}$.

Graphs in this paper are usually directed.
We denote the vertex set of a graph $G$ by $V(G)$ and the edge relation by $E(G)$.
Graphs are always loop-free, and there are no parallel edges.
For $X\subseteq V(G)$, we write $G[X]$ to denote the subgraph of $G$ induced on $X$.
A \emph{tournament} is a directed graph $T$ such that, for every distinct $v,w \in V(T)$, exactly one of the pairs $(v,w)$ and $(w,v)$ is an edge.
For a second tournament $F$, we say that $T$ is \emph{$F$-free} if there is no $X \subseteq V(T)$ such that $T[X]$ is isomorphic to $F$.

Let $G_1,G_2$ be two directed graphs.
An \emph{isomorphism} from $G_1$ to $G_2$ is a bijection $\varphi\colon V(G_1) \to V(G_2)$ such that $(v,w) \in E(G_1)$ if and only if $(\varphi(v),\varphi(w)) \in E(G_2)$ for all $v,w \in V(G_1)$.
We write $\varphi\colon G_1 \cong G_2$ to denote that $\varphi$ is an isomorphism from $G_1$ to $G_2$.
Also, $\Iso(G_1,G_2)$ denotes the set of all isomorphisms from $G_1$ to $G_2$.
The graphs $G_1$ and $G_2$ are \emph{isomorphic} if $\Iso(G_1,G_2) \neq \emptyset$.
The \emph{automorphism group} of $G_1$ is $\Aut(G_1) \coloneqq \Iso(G_1,G_1)$.

An \emph{arc coloring} of a (directed) graph $G$ is a mapping $\lambda\colon E(G) \to C$ for some set $C$ of ``colors''.
An \emph{arc-colored graph} is a triple $G=(V,E,\lambda)$, where $(V,E)$ is a graph an $\lambda$ an arc coloring of $(V,E)$.
Isomorphisms between arc-colored graphs are required to preserve the coloring.

\subsection{Linear Algebra}

For our spectral arguments, we require some background in linear algebra.
We refer the reader to \cite{HornJ13} for further background.

We write $I_n$ for the $n \times n$ identity matrix over $\RR$, and $J_n$ for the $n \times n$ all-ones matrix.
For a matrix $A \in \RR^{n \times m}$, we write $A^\trans$ for the transpose of $A$.
For vectors $x,y \in \RR^n$ we write $\langle x,y \rangle \coloneqq x^\trans y$ and $\|x\|_2 = \sqrt{\langle x,x \rangle} = \left(\sum_{i=1}^n x_i^2\right)^{1/2}$, where $x_i$ denotes the $i$-the entry of $x$.
The Cauchy-Schwarz inequality states that
\begin{equation}
 \label{eq:cauchy-schwarz}
 |\langle x,y \rangle| \leq \|x\|_2 \cdot \|y\|_2
\end{equation}
for all $x,y \in \RR^{n}$.
For a square matrix $A \in \RR^{n \times n}$ we have
\[\|A\|_2 \coloneqq \max_{x \in \RR^n, \|x\|_2 = 1} \|Ax\|_2 = \max_{0 \neq x \in \RR^n} \frac{\|Ax\|_2}{\|x\|_2}.\]

Let $M \in \RR^{n \times n}$ be a symmetric matrix, i.e., $M = M^\trans$.
Then it is well-known that $M$ has $n$ real eigenvalues $\lambda_1 \geq \lambda_2 \geq \dots \geq \lambda_n$.
For an eigenvalue $\lambda$ of $M$, let us write
\[E_\lambda \coloneqq \{v \in \RR^{n} \mid M \cdot v = \lambda \cdot v\}\]
for the eigenspace of $\lambda$.
The dimension of $E_\lambda$ is equal to the multiplicity of the eigenvalue $\lambda$.
We write $\lambda_{\max}(M)$ for the largest eigenvalue of $M$.
For each eigenspace of the matrix $M$, there is a projector $P$ to the eigenspace which can be written as a polynomial in $M$.
More precisely, we shall use the following fact.

\begin{fact}
 \label{fact:projector}
 Let $M \in \RR^{n \times n}$ be a symmetric matrix, and let $\lambda_1,\dots,\lambda_r$ denote the distinct eigenvalues such that $\lambda_1 = \lambda_{\max}(M)$.
 Let
 \[P \coloneqq \prod_{j = 2}^{r} \frac{1}{\lambda_1 - \lambda_j}(M - \lambda_jI_n)\]
 be the \emph{projector onto $E_{\lambda_1}$}, which satisfies $\im(P) \coloneqq \{Pv \mid v \in \RR^{n}\} = E_{\lambda_1}$.
\end{fact}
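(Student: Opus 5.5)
We have to show that the matrix $P$ from \Cref{fact:projector} is idempotent, symmetric, and that its image is exactly $E_{\lambda_1}$. The natural tool is the spectral theorem. Since $M$ is symmetric, $\RR^n$ decomposes as an orthogonal direct sum
\[\RR^n = E_{\lambda_1} \oplus E_{\lambda_2} \oplus \dots \oplus E_{\lambda_r}\]
of the eigenspaces of the distinct eigenvalues. It then suffices to determine how $P$ acts on each summand separately.

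First, fix $v \in E_{\lambda_i}$. Every factor $(M - \lambda_j I_n)$ is a polynomial in $M$, so all the factors commute, and each acts on $v$ as multiplication by the scalar $\lambda_i - \lambda_j$. Therefore
\[Pv = \Bigl(\prod_{j=2}^{r} \frac{\lambda_i - \lambda_j}{\lambda_1 - \lambda_j}\Bigr) v .\]
If $i = 1$, every factor equals $1$, so $Pv = v$. If $i \geq 2$, the factor with $j = i$ vanishes, so $Pv = 0$. The denominators are nonzero because the $\lambda_j$ are distinct.

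By linearity, for an arbitrary $x = \sum_i x_i$ with $x_i \in E_{\lambda_i}$, we get $Px = x_1$. So $P$ is the orthogonal projection onto $E_{\lambda_1}$. From this, $\im(P) = E_{\lambda_1}$ and $P^2 = P$ follow immediately. Symmetry holds because $P$ is a real polynomial in the symmetric matrix $M$; equivalently, it holds because the decomposition is orthogonal.

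There is no real obstacle here. The only point needing care is the claim that the eigenspaces span $\RR^n$, which is the spectral theorem for real symmetric matrices and may be assumed as standard. The degenerate case $r = 1$ is also covered: the product is then empty, so $P = I_n$, and $E_{\lambda_1} = \RR^n$.
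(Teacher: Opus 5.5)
Your proof is correct. The paper gives no proof of this fact and simply treats it as standard linear algebra (pointing to Horn and Johnson); your argument via the spectral decomposition, showing that $P$ acts as the identity on $E_{\lambda_1}$ and as zero on every $E_{\lambda_i}$ with $i \geq 2$, is the standard justification, and it never uses that $\lambda_1$ is the largest eigenvalue.
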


The matrix $M$ is \emph{positive semidefinite} if $x^\trans M x \geq 0$ for every $x \in \RR^{n}$.
This is equivalent to all eigenvalues of $M$ being non-negative.

The \emph{trace} of $M$ is the sum of diagonal entries, i.e., $\tr(M) \coloneqq \sum_{i=1}^n M_{i,i}$.
We have
\begin{equation}
 \label{eq:trace-eigenvalue}
 \tr(M) = \sum_{i = 1}^{n} \lambda_i.
\end{equation}

For every square matrix $A \in \RR^{n \times n}$, we have that $AA^\trans$ is symmetric and positive semidefinite.
Also, the norm of $A$ can be expressed via the largest eigenvalue of $AA^\trans$.

\begin{fact}
 \label{fact:eigenvalue-vs-norm}
 Let $A \in \RR^{n \times n}$ be a matrix.
 Then
 \[\|A\|_2^2 = \lambda_{\max}(AA^\trans).\]
\end{fact}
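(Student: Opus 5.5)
The statement to prove is \Cref{fact:eigenvalue-vs-norm}: $\|A\|_2^2 = \lambda_{\max}(AA^\trans)$. I would work with $A^\trans A$ first and then pass to $AA^\trans$.

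\textbf{Step 1: reduce to $A^\trans A$.} For any $x \in \RR^n$ we have
\[\|Ax\|_2^2 = \langle Ax, Ax\rangle = x^\trans (A^\trans A)\, x.\]
Hence
\[\|A\|_2^2 = \max_{\|x\|_2 = 1} x^\trans (A^\trans A)\, x.\]

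\textbf{Step 2: Rayleigh quotient.} The matrix $A^\trans A$ is symmetric, so it has an orthonormal eigenbasis $u_1,\dots,u_n$ with eigenvalues $\mu_1 \geq \dots \geq \mu_n$. Write a unit vector as $x = \sum_i c_i u_i$ with $\sum_i c_i^2 = 1$. Then
\[x^\trans (A^\trans A)\, x = \sum_i \mu_i c_i^2 \leq \mu_1,\]
and equality holds at $x = u_1$. Therefore $\|A\|_2^2 = \lambda_{\max}(A^\trans A)$.

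\textbf{Step 3: transfer to $AA^\trans$.} I will show that $A^\trans A$ and $AA^\trans$ have the same nonzero eigenvalues. Suppose $A^\trans A v = \mu v$ with $\mu \neq 0$ and $v \neq 0$. Then $AA^\trans (Av) = \mu\, Av$. Moreover $Av \neq 0$, since otherwise $\mu v = A^\trans A v = 0$. So $\mu$ is an eigenvalue of $AA^\trans$. The converse follows by the symmetric argument with the roles of $A$ and $A^\trans$ exchanged.

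Both matrices are positive semidefinite. Consequently, if one of them has largest eigenvalue $0$, it is the zero matrix, which forces $A = 0$ and hence the other is zero as well. In every case, therefore, $\lambda_{\max}(A^\trans A) = \lambda_{\max}(AA^\trans)$, which combined with Step 2 gives the claim.

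The only point needing care is Step 3, specifically that $Av \neq 0$ and the degenerate case $A = 0$; both are handled above. The rest is the standard spectral theorem argument.
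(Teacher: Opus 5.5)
Your proof is correct and complete. The paper gives no proof of this fact; it lists it as standard linear-algebra background and points to Horn and Johnson. Your argument is the textbook one: maximize the Rayleigh quotient of $A^\trans A$, then note that $A^\trans A$ and $AA^\trans$ are positive semidefinite and have the same nonzero eigenvalues. You handle the points that need care correctly, namely $Av \neq 0$ and the degenerate case.

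Two optional remarks. First, you can avoid Step 3 by observing that $\|A\|_2 = \max_{\|x\|_2 = \|y\|_2 = 1} y^\trans A x = \|A^\trans\|_2$ and then applying your Steps 1--2 to $A^\trans$. Second, in the paper's only use of this fact (the proof of the eigenspace-dimension lemma), the matrix $B_T = \frac{1}{2}(S_T - I_n)$ is normal, since $B_T B_T^\trans = B_T^\trans B_T = \frac{1}{4}(M_T + I_n)$. So your Steps 1--2 alone already suffice there.
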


Finally, we shall need the following fact on skew-symmetric matrices.

\begin{fact}
 \label{fact:even-multiplicity}
 Let $S \in \RR^{n \times n}$ be a skew-symmetric matrix, i.e., $S^\trans = -S$.
 Then every non-zero eigenvalue of $SS^\trans$ has even multiplicity.
\end{fact}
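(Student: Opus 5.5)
The statement is Fact~\ref{fact:even-multiplicity}: for real skew-symmetric $S$, every non-zero eigenvalue of $SS^\trans$ has even multiplicity. The plan is to build a complex structure on each non-zero eigenspace, via the following steps.

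First, $SS^\trans = -S^2$, so $S$ commutes with $SS^\trans$. Fix a non-zero eigenvalue $\mu$ of $SS^\trans$ and write $E_\mu$ for its eigenspace. Because $S$ commutes with $SS^\trans$, it maps $E_\mu$ into itself: if $SS^\trans v = \mu v$, then $SS^\trans(Sv) = S(SS^\trans v) = \mu Sv$.

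Next, restrict to $E_\mu$. There $-S^2 = \mu I$, and $\mu > 0$ since $SS^\trans$ is positive semidefinite. Define $J \coloneqq \mu^{-1/2} S|_{E_\mu}$. This is a real linear endomorphism of $E_\mu$ satisfying $J^2 = -\mathrm{id}$.

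Taking determinants then gives $\det(J)^2 = \det(J^2) = (-1)^{\dim E_\mu}$. The left-hand side is the square of a real number, so it is non-negative. Hence $(-1)^{\dim E_\mu} \geq 0$, which forces $\dim E_\mu$ to be even.

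Since $SS^\trans$ is symmetric, the dimension of $E_\mu$ equals the multiplicity of $\mu$, as recalled in the preliminaries. So the multiplicity of $\mu$ is even.

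As an alternative route, one could note that $v$ and $Sv$ are orthogonal, because $\langle v, Sv\rangle = 0$ for skew-symmetric $S$. One would then repeatedly split off the two-dimensional $S$-invariant planes $\mathrm{span}(v, Sv)$ inside $E_\mu$. The determinant argument is shorter.

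There is no real obstacle. The one point that needs care is checking that $S$ preserves $E_\mu$. That is exactly the commutation step above, so the argument is complete.
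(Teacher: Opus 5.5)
Your proof is correct and complete. $S$ commutes with $SS^\trans=-S^2$ and so preserves $E_\mu$. On $E_\mu$ you have $S^2=-\mu\,\mathrm{id}$ with $\mu>0$ by positive semidefiniteness, so $J=\mu^{-1/2}S|_{E_\mu}$ is a real endomorphism with $\det(J)^2=(-1)^{\dim E_\mu}$. This forces $\dim E_\mu$ to be even, and $\dim E_\mu$ equals the multiplicity because $SS^\trans$ is symmetric.

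The paper gives no proof of this statement to compare against. It lists it in the preliminaries as a standard linear-algebra fact (with a general pointer to Horn--Johnson) and uses it only in Lemma~\ref{lem:eigenspace-dimension} to get $d\geq 2$.

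The usual textbook route works over $\C$. The characteristic polynomial of the real matrix $S$ has real coefficients, so $i\sigma$ and $-i\sigma$ have equal multiplicity. By Schur triangularization, the multiplicity of a nonzero eigenvalue $\sigma^2$ of $-S^2$ is the sum of the multiplicities of the eigenvalues $\lambda$ of $S$ with $-\lambda^2=\sigma^2$, i.e.\ of $\pm i\sigma$. That sum is twice the multiplicity of $i\sigma$. Equivalently, one can invoke the real normal form of $S$ as a direct sum of blocks $\bigl(\begin{smallmatrix}0&\sigma_j\\-\sigma_j&0\end{smallmatrix}\bigr)$ and a zero block.

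Comparing the two routes:
\begin{itemize}
\item Your argument stays entirely over $\RR$ and uses only facts the paper already recalls (symmetric matrices, positive semidefiniteness). It also gives slightly more: $S/\sqrt{\mu}$ is a complex structure on every nonzero eigenspace.
\item The complex-eigenvalue route instead describes the whole spectrum of $S$ and $SS^\trans$ at once.
\end{itemize}

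Your sketched alternative also works. Splitting off the planes $\mathrm{span}(v,Sv)$ is valid since $\|Sv\|_2^2=\mu\|v\|_2^2>0$ and $S$ preserves orthogonal complements of $S$-invariant subspaces. As you say, the determinant argument is shorter.
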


\subsection{Weisfeiler-Leman Algorithm}

Next, we describe the $2$-dimensional Weisfeiler-Leman algorithm, originally introduced by Weisfeiler and Leman \cite{WeisfeilerL68} (see also \cite{Weisfeiler76}), which is used as a key subroutine in the main algorithm.

Let $G$ be a directed graph.
For $i \geq 0$, we describe the coloring $\WLit{2}{i}{G}$ of $(V(G))^2$ computed in the $i$-th iteration of $2$-WL.
For $i = 0$, each pair is colored with the isomorphism type of the underlying ordered induced subgraph.
So if $H$ is another directed graph and $v_1,v_2 \in V(G)$, $w_1,w_2 \in V(H)$
then $\WLit{2}{0}{G}(v_1,v_2) = \WLit{2}{0}{H}(w_1,w_2)$ if and only if, for all $i,j \in \{1,2\}$, it holds that $v_i = v_j \Leftrightarrow w_i = w_j$ and $(v_i,v_j) \in E(G) \Leftrightarrow (w_i,w_j) \in E(H)$.
If $G$ and $H$ are arc-colored, then the colors are also taken into account.

Now let $i \geq 0$.
For $v_1,v_2 \in V(G)$ we define
\[\WLit{2}{i+1}{G}(v_1,v_2) \coloneqq \Big(\WLit{2}{i}{G}(v_1,v_2), \CM_{i}(v_1,v_2)\Big)\]
where
\[\CM_i(v_1,v_2) \coloneqq \Big\{\!\Big\{ \big(\WLit{2}{i}{G}(x,v_2),\WLit{2}{i}{G}(v_1,x)\big) ~\Big|~ x \in V(G) \Big\}\!\Big\}\]
(here $\{\!\{\dots\}\!\}$ denotes a multiset).

Clearly, $\WLit{2}{i+1}{G} \preceq \WLit{2}{i}{G}$ for all $i \geq 0$.
So there is a unique minimal $i_\infty \geq 0$ such that $\WLit{2}{i_\infty+1}{G} \equiv \WLit{2}{i_\infty}{G}$ and we write $\WL{2}{G} \coloneqq \WLit{2}{i_\infty+1}{G}$ to denote the corresponding coloring.

The $2$-dimensional Weisfeiler-Leman ($2$-WL) algorithm takes as input a (possibly colored) graph $G$ and outputs (a coloring that is equivalent to) $\WL{2}{G}$.
This can be done in time $O(n^3 \log n)$ \cite{ImmermanL90}.

Let $H$ be a second graph.
The $2$-WL algorithm \emph{distinguishes} $G$ and $H$ if there is a color $c$ such that
\[\Big|\Big\{ \bar v \in (V(G))^2 ~\Big|~ \WL{2}{G}(\bar v) = c \Big\}\Big| \neq \Big|\Big\{ \bar w \in (V(H))^2 ~\Big|~ \WL{2}{H}(\bar w) = c \Big\}\Big|.\]
We write $G \simeq_2 H$ to denote that $2$-WL does not distinguish between $G$ and $H$.

A key focus of our main algorithm is $2$-WL-homogeneous tournaments.
A directed graph $G$ is \emph{$2$-WL-homogeneous} if for all $v,w\in V(G)$ it holds that $\WL{2}{G}(v,v) = \WL{2}{G}(w,w)$.

Let $G$ be a directed graph and $\CQ$ a partition of $V(G)$.
We say that $\CQ$ is \emph{$\WL{2}{G}$-definable} if there is some set $C$ of colors such that
\[v \sim_\CQ w \quad\iff\quad \WL{2}{G}(v,w) \in C\]
for all $v,w \in V(G)$.
For a $\WL{2}{G}$-definable partition $\CQ$, a \emph{cross-$\CQ$ color} is a color $c$ in the image of $\WL{2}{G}$ that is not contained in $C$.
We shall require the following well-known fact (see, e.g., \cite{ChenP19}).

\begin{fact}
 \label{fact:wl-partition}
 Let $T$ be a $2$-WL-homogeneous, arc-colored tournament and let $\CQ,\CE$ be two $\WL{2}{T}$-definable partitions of $V(T)$.
 Also let $X \in \CE$.
 Define $S = (T[X],\lambda_S)$ to be the subtournament induced by $X$ with arc-coloring $\lambda_S(x,y) \coloneqq \WL{2}{T}(x,y)$ for all $(x,y) \in E(S) = E(T[X])$.

 Then $\CQ \cap X$ is $\WL{2}{S}$-definable and
 \[\WL{2}{T}(v,w) = \WL{2}{T}(x,y) \quad\implies\quad \WL{2}{S}(v,w) = \WL{2}{S}(x,y)\]
 for all $v,w,x,y \in X$.
\end{fact}

We note that a suitable version also holds for arbitrary digraphs, if the arc-coloring is replaced by a coloring of all pairs.

We shall also use an alternative description of the Weisfeiler-Leman algorithm based on so-called \emph{coherent algebras} (dating back to Higman \cite{Higman87}).
Let $G$ be a directed graph and let $A_G \in \{0,1\}^{V(G) \times V(G)}$ denote its adjacency matrix.
The \emph{coherent algebra} of $G$ is the algebra $\Alg(G) \subseteq \C^{V(G) \times V(G)}$ obtained by closing the set $\{A_G,A_G^\trans,I_n,J_n\}$ under scalar multiplication, matrix addition, matrix multiplication, and Hadamard multiplication (i.e., for matrices $B,C \in \C^{n \times n}$, the Hadamard product is defined as $(B \circ C)_{i,j} \coloneqq B_{i,j} \cdot C_{i,j}$).
The following fact connects the $2$-WL algorithm to the coherent algebra of $G$ (see, e.g., \cite{ChenP19}).

\begin{fact}
 \label{fact:wl-coherent-algebra}
 Let $G$ be a directed graph, and let $v,w,x,y \in V(G)$.
 Then
 \[\WL{2}{G}(v,w) = \WL{2}{G}(x,y) \quad\iff\quad M_{v,w} = M_{x,y} ~\forall M \in \Alg(G).\]
\end{fact}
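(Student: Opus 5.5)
The plan is to compare $\Alg(G)$ with the linear span of the color-class indicator matrices of $\WL{2}{G}$, and to prove the two inclusions separately. For a color $c$ of $\WL{2}{G}$, let $R_c \in \{0,1\}^{V(G)\times V(G)}$ be given by $(R_c)_{v,w} = 1$ if and only if $\WL{2}{G}(v,w) = c$. Let $W \coloneqq \operatorname{span}_{\C}\{R_c\}$. A matrix lies in $W$ exactly when it is constant on every color class of $\WL{2}{G}$. The statement is then equivalent to $\Alg(G) = W$: the direction ``$\Rightarrow$'' follows from $\Alg(G) \subseteq W$, and the direction ``$\Leftarrow$'' follows from $R_c \in \Alg(G)$ for all $c$.

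\emph{Step 1: $\Alg(G) \subseteq W$.} First, $W$ contains the four generators.
\begin{itemize}
 \item $I_n$, $J_n$, $A_G$ and $A_G^\trans$ are constant on color classes, because the iteration-$0$ color of $(v,w)$ already records whether $v=w$, whether $(v,w)\in E(G)$ and whether $(w,v)\in E(G)$, and $\WL{2}{G}$ refines it.
 \item $W$ is closed under scalar multiplication and addition trivially.
 \item $W$ is closed under Hadamard products, since $R_a \circ R_b = [a=b]\,R_a$.
\end{itemize}
The real content is closure under ordinary matrix multiplication. We have $(R_aR_b)_{v_1,v_2} = |\{x : \WL{2}{G}(v_1,x)=a,\ \WL{2}{G}(x,v_2)=b\}|$. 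Stability, meaning $\WLit{2}{i_\infty+1}{G} \equiv \WLit{2}{i_\infty}{G}$, says that the multiset $\CM(v_1,v_2)$ of pairs $(\WL{2}{G}(x,v_2),\WL{2}{G}(v_1,x))$ is determined by $\WL{2}{G}(v_1,v_2)$. Hence this count depends only on the color of $(v_1,v_2)$, so $R_aR_b \in W$. Since $W$ is a subalgebra under all four operations and contains the generators, it contains $\Alg(G)$.

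\emph{Step 2: $R_c \in \Alg(G)$ for every color $c$.} I would show by induction on $i$ that the indicator matrix $R^{(i)}_c$ of every color class of $\WLit{2}{i}{G}$ lies in $\Alg(G)$. Taking $i = i_\infty+1$ then gives the claim.

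For $i=0$, each atomic type is a Hadamard product of matrices from the set
\[\{I_n,\ J_n-I_n,\ A_G,\ A_G^\trans,\ J_n-A_G,\ J_n-A_G^\trans\}.\]
For example, the class of pairs with $v\neq w$ and $(v,w)\in E$ but $(w,v)\notin E$ is $(J_n-I_n)\circ A_G \circ (J_n-A_G^\trans)$. Arc colors are handled by the analogous color-indicator generators.

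For the inductive step, fix colors $a,b$ of iteration $i$. The matrix $N_{a,b} \coloneqq R^{(i)}_bR^{(i)}_a$ has entry at $(v_1,v_2)$ equal to the multiplicity of $(a,b)$ in $\CM_i(v_1,v_2)$; the order of the factors matches the convention $(\WLit{2}{i}{G}(x,v_2),\WLit{2}{i}{G}(v_1,x))$. The entries of $N_{a,b}$ lie in $\{0,\dots,n\}$. For each $m$, take the Lagrange polynomial $p_m$ that equals $1$ at $m$ and $0$ at the other values in $\{0,\dots,n\}$, and evaluate it entrywise using Hadamard powers, with $J_n$ playing the role of the constant $1$. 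This shows that the $0/1$ matrix $[N_{a,b}=m]$ lies in $\Alg(G)$. A color class of iteration $i+1$ is specified by an old color $c$ together with prescribed multiplicities $m_{a,b}$. Its indicator is therefore
\[R^{(i)}_c \circ \mathop{\bigcirc}_{a,b}\,[N_{a,b}=m_{a,b}],\]
a Hadamard product of finitely many matrices, and so it lies in $\Alg(G)$.

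I expect the main obstacle to be the multiplicativity check in Step 1. One has to carefully match the index order of $R_aR_b$ against the definition of $\CM_i$ and use that the coloring is genuinely stable, not merely some iterate. Step 2 is routine once the Hadamard-polynomial interpolation trick is in place.
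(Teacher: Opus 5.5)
The paper gives no proof of this statement. It quotes it as a well-known fact with a pointer to the literature, so there is no in-paper argument to compare against. Your proof is correct and self-contained.

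In Step~1, the one point worth writing out is the translation of colors. $\WL{2}{G}=\WLit{2}{i_\infty+1}{G}$ records $\CM_{i_\infty}$ in terms of $\WLit{2}{i_\infty}{G}$-colors, and you transfer these to $\WL{2}{G}$-colors via the color bijection coming from $\WLit{2}{i_\infty}{G}\equiv\WLit{2}{i_\infty+1}{G}$. In Step~2, your convention $N_{a,b}=R^{(i)}_bR^{(i)}_a$ does count the multiplicity of $(a,b)$ in $\CM_i(v_1,v_2)$. The entrywise Lagrange interpolation via Hadamard powers, with $J_n$ as the constant term, legitimately produces each $0/1$ matrix $[N_{a,b}=m]$ inside $\Alg(G)$.

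A common textbook route to this fact goes through the structure theory of coherent algebras. A Hadamard-closed matrix algebra containing $I_n$ and $J_n$ has a basis of $0/1$ matrices with disjoint supports. These matrices form a coherent configuration in which $E(G)$ is a union of basis relations, and the $2$-WL partition is the coarsest such configuration; combined with your Step~1, this gives equality. Your Step~2 replaces the basis theorem and this minimality property of $2$-WL by an explicit induction that builds every color-class indicator, which is more elementary.

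Your split also makes visible that the paper only ever uses your Step~1, i.e., the direction ``$\Rightarrow$''. It is used in Lemma~\ref{lem:wl-projection-distance} and in the case $\delta=0$ of the main structural lemma. This direction carries over verbatim to the arc-colored tournaments to which the paper actually applies it. The reason is that a finer stable coloring still makes $I_n$, $J_n$, $A_T$, $A_T^\trans$ constant on its color classes.
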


\subsection{Group Theory}

Next, we describe the required tools from the group-theoretic machinery.
For basic background on permutation groups and and their computational aspects, we also refer to \cite{DixonM96,Rotman99,Seress03}.

Let $\Gamma$ be a group and let $\gamma,\delta \in \Gamma$.
The \emph{commutator} of $\gamma$ and $\delta$ is $[\gamma,\delta] \coloneqq \gamma^{-1}\delta^{-1}\gamma\delta$.
The \emph{commutator subgroup $[\Gamma,\Gamma]$} of $\Gamma$ is the unique subgroup of $\Gamma$ generated by all commutators $[\gamma,\delta]$ for $\gamma,\delta \in \Gamma$.
Note that $[\Gamma,\Gamma]$ is a normal subgroup of $\Gamma$.
The \emph{derived series of $\Gamma$} is the sequence of subgroups $\Gamma^{(0)} \trianglerighteq \Gamma^{(1)} \trianglerighteq \Gamma^{(2)} \trianglerighteq \dots$ where $\Gamma^{(0)} \coloneqq \Gamma$ and $\Gamma^{(i+1)} \coloneqq [\Gamma^{(i)},\Gamma^{(i)}]$ for all $i \geq 0$.
A group $\Gamma$ is \emph{solvable} if there is some $i \geq 0$ such that $\Gamma^{(i)}$ is the trivial group (i.e., it only contains the identity element).

By the Feit-Thompson Theorem \cite{FeitT63} every group of odd order is solvable.
Also, for every tournament $T$, the automorphism group $\Aut(T)$ has odd order.
Indeed, $\Aut(T)$ cannot contain an involution (a permutation of order $2$), since every involution swaps some pair $v,w$ of distinct vertices and exactly one of $(v,w), (w,v)$ is an edge of $T$.
Together, we obtain the following:

\begin{theorem}
 \label{thm:aut-solvable}
 Let $T$ be a tournament.
 Then $\Aut(T)$ is solvable.
\end{theorem}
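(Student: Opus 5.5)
The plan is to show that $\Aut(T)$ has odd order and then apply the Feit--Thompson Theorem \cite{FeitT63}, which says that every group of odd order is solvable. This is essentially the argument sketched just before the statement; I would only make the parity step precise.

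\textbf{Step 1: $\Aut(T)$ contains no element of order $2$.} Suppose $\gamma \in \Aut(T)$ has order $2$. Since $\gamma$ is not the identity, there is a vertex $v$ with $w \coloneqq \gamma(v) \neq v$. Because $\gamma^2$ is the identity, $\gamma(w) = v$, so $\gamma$ swaps the two distinct vertices $v$ and $w$. As $T$ is a tournament, exactly one of $(v,w)$ and $(w,v)$ is an edge; say $(v,w) \in E(T)$. Since $\gamma$ is an automorphism, $(\gamma(v),\gamma(w)) = (w,v) \in E(T)$ as well. This contradicts the fact that exactly one of the two pairs is an edge, so no such $\gamma$ exists.

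\textbf{Step 2: $|\Aut(T)|$ is odd.} $\Aut(T)$ is a finite subgroup of $\Sym(V(T))$. If its order were even, Cauchy's theorem would give an element of order $2$, contradicting Step 1. Alternatively, one can pair each group element $\gamma$ with its inverse $\gamma^{-1}$. The only element equal to its own inverse is the identity, since a non-identity such element would be an involution. So the non-identity elements split into pairs of size $2$, and $|\Aut(T)| = 1 + 2m$ for some $m$, which is odd.

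\textbf{Step 3: conclude.} Apply Feit--Thompson to $\Gamma = \Aut(T)$: its derived series $\Gamma^{(i)}$ reaches the trivial group after finitely many steps, so $\Aut(T)$ is solvable.

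There is no real obstacle here. All the depth sits in the Feit--Thompson Theorem, which I take as given. The one point that needs care is Step 2, where we have to get from \emph{no element of order $2$} to \emph{odd order}. This is handled either by Cauchy's theorem or by the elementary pairing of each element with its inverse.
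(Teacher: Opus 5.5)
Your proof is correct and follows the same route as the paper: an involution would swap two distinct vertices and thus reverse the unique edge between them, so $\Aut(T)$ has odd order, and Feit--Thompson then gives solvability. The paper leaves implicit the step from ``no involution'' to ``odd order'', and your justification of it, via Cauchy's theorem or by pairing each element with its inverse, fills that gap.
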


This theorem is useful, since several computational problems on permutation groups can be solved in polynomial time on solvable groups, while for general permutation groups, no polynomial-time algorithms are currently known.

Let $G_1$ and $G_2$ be two (arc-colored) directed graphs.
Also let $\Gamma \leq \Sym(V(G_1))$ be a permutation group and let $\theta\colon V(G_1) \to V(G_2)$ be a bijection.
Observe that $\Gamma\theta \coloneqq \{\gamma\theta \mid \gamma \in \Gamma\}$ is a set of bijections from $V(G_1)$ to $V(G_2)$.
We define
\[\Iso_{\Gamma\theta}(G_1,G_2) \coloneqq \Iso(G_1,G_2) \cap \Gamma\theta = \{\varphi \in \Gamma\theta \mid \varphi\colon G_1 \cong G_2\}\]
and $\Aut_\Gamma(G_1) \coloneqq \Iso_\Gamma(G_1,G_1)$.
Note that $\Aut_\Gamma(G_1) \leq \Gamma$ and, if $\Iso_{\Gamma\theta}(G_1,G_2) \neq \emptyset$, then $\Iso_{\Gamma\theta}(G_1,G_2) = \Aut_\Gamma(G_1)\varphi$ where $\varphi \in \Iso_{\Gamma\theta}(G_1,G_2)$ is an arbitrary isomorphism from $G_1$ to $G_2$.

\begin{theorem}[{\cite[Corollary 3.6]{BabaiL83}}]
 \label{thm:gi-solvable-group}
 Let $G_1 = (V_1,E_1,\lambda_1)$ and $G_2 = (V_2,E_2,\lambda_2)$ be two arc-colored directed graphs.
 Also let $\Gamma \leq \Sym(V_1)$ be a solvable group and $\theta\colon V_1 \to V_2$ a bijection.
 Then $\Iso_{\Gamma\theta}(G_1,G_2)$ can be computed in polynomial time.
\end{theorem}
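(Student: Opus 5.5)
Since this is a cited result (\cite[Corollary 3.6]{BabaiL83}), I only outline how I would reprove it, using Luks' divide-and-conquer method for string isomorphism restricted to groups of bounded ``primitive complexity''.

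\emph{Reduction to strings.} The first step rephrases the problem as a string isomorphism problem. Let $\Omega \coloneqq V_1 \times V_1$, and let $\Gamma$ act on $\Omega$ coordinatewise. This action is a homomorphic image of $\Gamma$, so the induced permutation group on $\Omega$ is again solvable. Encode $G_1$ as a string $\mathfrak{x}\colon \Omega \to \Sigma$ that records, for each pair, whether it is an arc and, if so, its colour. Encode $G_2$ as a string $\mathfrak{y}$ on $V_2 \times V_2$, pulled back along $\theta$. Then $\Iso_{\Gamma\theta}(G_1,G_2)$ consists exactly of the elements of $\Gamma\theta$ that map $\mathfrak{x}$ to $\mathfrak{y}$. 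All groups are represented by generating sets, and cosets by a group together with one representative; sifting via Schreier--Sims keeps these representations polynomial.

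\emph{Recursion.} I would then solve the generalized problem $\Iso^W_{\Delta\sigma}(\mathfrak{x},\mathfrak{y})$: the elements of a coset $\Delta\sigma$ that match $\mathfrak{x}$ and $\mathfrak{y}$ on a $\Delta$-invariant window $W \subseteq \Omega$. There are two recursive cases.
\begin{enumerate}
\item If $\Delta$ is intransitive on $W$, split $W = W_1 \cup W_2$ along orbits. First compute the coset of isomorphisms on $W_1$; this is $\Delta'\sigma'$ for a subgroup $\Delta' \le \Delta$. Then recurse on $W_2$ inside $\Delta'\sigma'$.
\item If $\Delta$ is transitive on $W$, choose a minimal block system $\mathcal{B}$. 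The image of $\Delta$ on $\mathcal{B}$ is primitive and solvable. By P\'alfy's theorem it has order at most $|\mathcal{B}|^{c}$ for an absolute constant $c$. Hence the kernel $N$ of this action has index at most $|\mathcal{B}|^{c}$. Write $\Delta\sigma$ as a union of at most $|\mathcal{B}|^c$ cosets $N\tau_i$. Since $N$ stabilizes every block, $N$ is intransitive, so each $N\tau_i$ can be handled by the intransitive case, processing block by block.
\end{enumerate}
The nonempty partial results are combined into one coset by adjoining the differences of representatives as generators of the current group.

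\emph{Main obstacle.} The key difficulty is the running-time analysis. I would set up the recurrence $T(m) \le |\mathcal{B}|^{c}\cdot |\mathcal{B}| \cdot T(m/|\mathcal{B}|) + \mathrm{poly}(m)$ for the transitive case, and $T(m) \le T(m_1)+T(m_2)+\mathrm{poly}(m)$ for the intransitive case. The point is that the branching factor $|\mathcal{B}|^{c+1}$ is paid once per level while the window size shrinks by a factor $|\mathcal{B}|$. Unrolling the recursion then gives $T(m) \le m^{O(c)}$, which is polynomial in $m = |\Omega| = |V_1|^2$. Checking that the bookkeeping (the multiplicative cost of branching versus the shrinking windows) indeed closes with a polynomial bound is the step I expect to require the most care.

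The solvability hypothesis enters only through P\'alfy's bound. Applied with $\Gamma$ solvable, which is the situation arising from \Cref{thm:aut-solvable}, this yields the claimed polynomial-time algorithm.
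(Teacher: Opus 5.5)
Your outline is correct and is essentially the standard proof behind the cited Babai--Luks result. It follows Luks' divide-and-conquer for string isomorphism: intransitive windows are split along orbits via the chain rule, and transitive windows are handled through a primitive action on a block system, with solvability entering only through P\'alfy's polynomial bound on primitive solvable groups, so the recursion tree has $|\Omega|^{O(1)}$ nodes. The paper gives no proof of its own and simply imports the theorem from Babai and Luks.
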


We stress that $\Gamma$ is given by a set of generators (typically of size polynomial in $n = |V(G_1)|$).
Also note that $\Iso_{\Gamma\theta}(G_1,G_2)$ may be of size exponential in the number of vertices of $G_1$ and $G_2$.
However, if $\Iso_{\Gamma\theta}(G_1,G_2) \neq \emptyset$, $\Iso_{\Gamma\theta}(G_1,G_2) = \Aut_\Gamma(G_1)\varphi$ where $\varphi \in \Iso_{\Gamma\theta}(G_1,G_2)$ is an arbitrary isomorphism from $G_1$ to $G_2$.
Hence, the set $\Iso_{\Gamma\theta}(G_1,G_2)$ can be represented by a generating set for $\Aut_\Gamma(G_1)$ of size polynomial in $|V(G_1)|$ and a single element $\varphi \in \Iso_{\Gamma\theta}(G_1,G_2)$.
Let us stress at this point that all isomorphism sets considered in this work are represented in this way.

\section{Main Structural Lemma}

In the following, we design an FPT isomorphism test for $F$-free tournaments.
The algorithm essentially consists of three parts.
First, we use well-established group-theoretic methods to reduce to the case where both input tournaments are $2$-WL-homogeneous (the exact same step is also used, e.g., in \cite{GroheN26}).
For $2$-WL-homogeneous tournaments, we proceed in two phases.
First, we use spectral and geometric arguments to show that the coloring produced by the $2$-WL algorithm admits a subset $C$ of colors that define a subgraph (i.e., we pick exactly the edges with a color from $C$) that has a ``near-bounded-degree'' structure.
Then, in the second phase, we use group-theoretic methods to solve isomorphism by exploiting the discovered ``near-bounded-degree'' structure.
In fact, the second phase of the algorithm is essentially identical to \cite{GroheN26} where the same ``near-bounded-degree'' structure is exploited to design an FPT isomorphism for tournaments of twin-width at most $k$.
Hence, the main technical contribution of this paper lies in the first phase which discovers the ``near-bounded-degree'' structure.

In this section, we cover the first phase of the algorithm and prove the main structural lemma (\Cref{lem:main-structural}).
To formulate the lemma, we first need to define some parameters for the forbidden tournament $F$.
Let $F$ be a tournament and let $k \coloneqq |V(F)|$.
We define
\[m \coloneqq \binom{k}{2} \quad\text{and}\quad K(F) \coloneqq 4^{m} \quad\text{and}\quad D(F) \coloneqq 3^{K(F)} - 1.\]
Observe that $D(F) = 2^{2^{O(k^2)}}$.

\begin{lemma}
 \label{lem:main-structural}
 Let $F$ be a tournament.
 Also let $T$ be an $F$-free, arc-colored tournament that is $2$-WL-homogeneous.
 Let $\CQ$ be a $\WL{2}{T}$-definable partition of $V(T)$ such that $|\CQ| \geq 2$.

 Then there is a cross-$\CQ$ color $c \in \{\WL{2}{T}(x,y) \mid (x,y) \in E(T)\}$ such that
 \begin{equation}
  \label{eq:main-structural}
  |\{Q \in \CQ \mid \exists w \in Q\colon \WL{2}{T}(v,w) = c\}| \leq D(F)
 \end{equation}
 for every $v \in V(T)$.
\end{lemma}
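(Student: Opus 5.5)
The plan follows the spectral–geometric outline from the introduction, applied not to all of $T$ but across the classes of $\CQ$. Let $A$ be the adjacency matrix of $T$, $S \coloneqq A - A^\trans$ (skew-symmetric with $\pm1$ off the diagonal), and $M \coloneqq SS^\trans = (A-A^\trans)(A^\trans - A)$, which is symmetric and positive semidefinite. Since $S$ is a linear combination of $A$ and $A^\trans$, we have $M \in \Alg(T)$. By \Cref{fact:projector}, the projector $P$ onto $E_{\lambda_{\max}(M)}$ is a polynomial in $M$, so $P \in \Alg(T)$ as well. The Gram entries $\langle p_v,p_w\rangle = P_{v,w}$ of the points $p_v \coloneqq Pe_v$ are therefore constant on $2$-WL color classes by \Cref{fact:wl-coherent-algebra}. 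The same then holds for the distances $\|p_v-p_w\|_2^2 = P_{v,v}+P_{w,w}-2P_{v,w}$, using $2$-WL-homogeneity for the diagonal terms.

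\textbf{Step 1 (main obstacle): bound $d \coloneqq \dim E_{\lambda_{\max}(M)} \le K(F) = 4^m$.} Since $\tr(M) = n(n-1)$ and $M$ is positive semidefinite, a multiplicity $d$ forces $\lambda_{\max} \le n^2/d$. Hence $\|S\|_2 \le n/\sqrt d$ by \Cref{fact:eigenvalue-vs-norm}, and $|\sum_{x\in X,y\in Y} S_{x,y}| \le n^2/\sqrt d$ for all $X,Y$. This is the quasi-randomness regime of Chung--Graham. I would show, via a counting argument in the style of \cite{ChungG91,KalyanasundaramS13}, that a small spectral norm forces every $k$-vertex tournament, in particular $F$, to appear as an induced subtournament once $\|S\|_2 \le n/2^{m}$ roughly. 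The approach is to embed the vertices of $F$ one at a time while keeping candidate sets large. At each step a candidate set is split into out- and in-neighbourhoods of the newly chosen vertex, each of density about $1/2$. The discrepancy bound controls the error, and over the $m$ arc constraints this loses a factor of $2$ per arc, which is where $4^m$ comes from. I expect making this embedding lemma go through cleanly, including handling collisions and the exact constant, to be the hardest part.

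Additionally, \Cref{fact:even-multiplicity} gives $d \ge 2$, since $\lambda_{\max}>0$ for $n\ge 2$.

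\textbf{Step 2: packing across classes.} Define
\[\delta \coloneqq \min\{\|p_v-p_w\|_2 \mid v\not\sim_\CQ w\}.\]
If $\delta>0$, let $c$ be any color $\WL{2}{T}(v,w)$ realizing $\delta$ with $v \not\sim_\CQ w$. This is a cross-$\CQ$ color, and replacing it by its transpose if needed (colors of $(v,w)$ and $(w,v)$ have equal distance) we may assume $c$ is an arc color. Fix $v$, and pick one $w_Q$ in each class $Q$ with $\WL{2}{T}(v,w_Q)=c$. The points $p_{w_Q}$ then lie on a sphere of radius $\delta$ around $p_v$ in the $d$-dimensional space $\im(P)$, and are pairwise at distance $\ge\delta$ since they come from distinct classes. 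A volume packing argument bounds the number of classes by $3^d-1 \le 3^{K(F)}-1 = D(F)$: balls of radius $\delta/2$ around these points are disjoint and lie inside a ball of radius $3\delta/2$ that also contains the ball around $p_v$.

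\textbf{Step 3: the case $\delta = 0$.} Here the equality relation $p_v=p_w$ defines a $\WL{2}{T}$-definable partition $\CE$, which does not refine $\CQ$-inside-classes. The classes of $\CE$ are nontrivial, since otherwise $\delta>0$ within. I would induct on $n$ by passing to a class $X\in\CE$ meeting at least two classes of $\CQ$, with the induced tournament $S$ arc-colored by $\WL{2}{T}$. This tournament is $F$-free and $2$-WL-homogeneous by \Cref{fact:wl-partition}, and $\CQ\cap X$ is $\WL{2}{S}$-definable. To make the induction terminate I would need $X \neq V(T)$. This holds because $p_v=p_w$ for all $v,w$ would make $\im(P)$ one-dimensional, namely the span of the all-ones vector, contradicting $d\ge2$. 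Induction then yields a color $c'$ for $S$ with the bound. To lift it back, I would take $c$ to be the corresponding $\WL{2}{T}$-color, using that $\WL{2}{T}$ refines $\WL{2}{S}$ on $X$. Finally, transfer the bound from $v\in X$ to all $v\in V(T)$ using homogeneity and the fact that vertices outside $X$ cannot be $c$-neighbours, since $c$-pairs lie within $\CE$-classes.
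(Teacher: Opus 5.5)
Your reduction of distances to $2$-WL colors (via $P\in\Alg(T)$ and $\|p_v-p_w\|_2^2=P_{v,v}+P_{w,w}-2P_{v,w}$), the packing argument for $\delta>0$, and the induction on a class $X\in\CE$ for $\delta=0$ all match the paper's proof.

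The genuine gap is Step~1, which is where $F$-freeness actually gets used. You do not prove $d\le K(F)$. You only announce an embedding lemma, and you yourself leave its constants and collisions unresolved. Moreover, the sketched route does not naturally give the statement as written, since $D(F)=3^{K(F)}-1$ fixes $K(F)=4^{m}$. In a discrepancy-driven greedy embedding, the requirement on $d$ is set by three things: the smallest candidate sets (size about $n2^{-(k-1)}$), the $\epsilon$-slack lost at each step, and a union bound over the remaining vertices of $F$. Doing this bookkeeping in the obvious way gives a threshold of order $\mathrm{poly}(k)\cdot 4^{k}$. For small $k$ this exceeds $4^m$; already at $k=3$, where $4^m=64$, it is far larger. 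Your heuristic ``a factor $2$ per arc, hence $4^m$'' fits a homomorphism count, not a greedy embedding.

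The missing idea is the counting lemma (\Cref{lem:hom-estimate}).
\begin{itemize}
\item Set $B\coloneqq A-\frac12J_n=\frac12(S-I_n)$. The telescoping identity $-2^{-m}+\prod_{i}z_i=\sum_{i}2^{-m+i}(z_i-\frac12)\prod_{j<i}z_j$ splits $\hom(F,T)-2^{-m}n^k$ into $m$ sums.
\item In the $i$-th sum, fix all $v_s$ with $s\notin\{a_i,b_i\}$. The product of the earlier arc indicators then factors as $f(v_{a_i})g(v_{b_i})$ for $0/1$-vectors $f,g$, because $F$ has only one arc between $a_i$ and $b_i$.
\item So each inner sum is $f^\trans Bg$ with $|f^\trans Bg|\le n\|B\|_2$. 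Summing gives $|\hom(F,T)-2^{-m}n^k|\le 2n^{k-1}\|B\|_2$.
\item Collisions are no issue. A homomorphism from a tournament into a loopless tournament is injective, and its image induces a copy of $F$. Hence $F$-freeness means exactly $\hom(F,T)=0$, which forces $\|B\|_2\ge 2^{-(m+1)}n$.
\item Finally, $BB^\trans=\frac14(M+I_n)$ gives $\lambda_{\max}(M)=4\|B\|_2^2-1\ge n^2/4^m-1$. Combined with $d\,\lambda_{\max}(M)\le n(n-1)$, this yields $d\le 4^m$ when $n\ge 4^m$; otherwise $d\le n<4^m$ trivially.
\end{itemize}
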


The remainder of this section is devoted to the proof of \Cref{lem:main-structural}.
In a first step, we prove that the eigenspace of the largest eigenvalue of a matrix associated with $T$ has dimension $2 \leq d \leq K(F)$.
Afterwards, we use geometric packing argument in the eigenspace to bound the number of blocks in \eqref{eq:main-structural}.

\subsection{Bounding the Multiplicity of the Largest Eigenvalue}

As the first step towards proving \Cref{lem:main-structural}, we bound the dimension of the eigenspace for the largest eigenvalue of the matrix $M_T$ defined below.

Let $T$ be a tournament.
We write $A = A_T$ for the adjacency matrix of $T$, i.e., $A_{v,w} = 1$ if and only if $(v,w) \in E(T)$.
Also, we write
\[S_T \coloneqq A_T - A_T^\trans\]
for the \emph{skew adjacency matrix}.
Observe that $S_T^\trans = -S_T$.
Finally, we define
\[M_T \coloneqq S_T \cdot S_T^\trans = -S_T^2\]
which is symmetric and positive semidefinite.
In particular, all its eigenvalues are real and non-negative, and we obtain the following basic observation.

\begin{observation}
 \label{obs:bound-eigenvalue-sum}
 Let $T$ be an $n$-vertex tournament and $M \coloneqq M_T$.
 Let $\lambda_1 \geq \lambda_2 \geq \dots \geq \lambda_n \geq 0$ denote the eigenvalues of $M$.
 Then $M_{v,v} = n-1$ for all $v \in V(T)$.
 In particular,
 \[n(n-1) = \tr(M) = \sum_{i = 1}^{n} \lambda_i.\]
\end{observation}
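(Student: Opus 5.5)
The statement is elementary and follows directly from the definition of $M = S_T S_T^\trans$ together with \eqref{eq:trace-eigenvalue}. I compute the diagonal entries of $M$ and then sum them.

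\emph{Diagonal entries.} For $v \in V(T)$ we have
\[M_{v,v} = \sum_{w \in V(T)} (S_T)_{v,w}\,(S_T^\trans)_{w,v} = \sum_{w \in V(T)} (S_T)_{v,w}^2.\]
By definition, $(S_T)_{v,w} = A_{v,w} - A_{w,v}$.
\begin{itemize}
 \item For $w = v$, this entry is $0$, since tournaments are loop-free.
 \item For $w \neq v$, exactly one of $(v,w)$ and $(w,v)$ is an edge of $T$. Hence $(S_T)_{v,w} \in \{1,-1\}$, and its square is $1$.
\end{itemize}
Summing over the $n-1$ vertices $w \neq v$ gives $M_{v,v} = n-1$.

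\emph{Trace and eigenvalues.} Summing the diagonal entries over all $n$ vertices gives $\tr(M) = n(n-1)$. Since $M$ is symmetric, \eqref{eq:trace-eigenvalue} gives $\tr(M) = \sum_{i=1}^n \lambda_i$.

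The eigenvalues are real because $M$ is symmetric. They are non-negative because $M$ is positive semidefinite: for every $x$,
\[x^\trans M x = \|S_T^\trans x\|_2^2 \geq 0.\]
This justifies the ordering $\lambda_1 \geq \dots \geq \lambda_n \geq 0$.

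No step presents any real obstacle. The only point requiring care is using the tournament property that exactly one of the two arcs between $v$ and $w$ is present, so that every off-diagonal entry of $S_T$ is $\pm 1$.
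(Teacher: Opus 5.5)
Your proof is correct and follows the paper's argument: you write $M_{v,v} = \sum_{w} (S_T)_{v,w}^2$, use that the off-diagonal entries of $S_T$ are $\pm 1$ and the diagonal entries are $0$ to get $n-1$, and then sum over $v$ and apply \eqref{eq:trace-eigenvalue}. Your extra check of non-negativity via $x^\trans M x = \|S_T^\trans x\|_2^2 \geq 0$ is a harmless addition, since the paper states positive semidefiniteness of $M_T$ just before the observation.
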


\begin{proof}
 Let us write $S \coloneqq S_T$.
 Let $v \in V(T)$.
 Then
 \begin{align*}
  M_{v,v} = \sum_{w \in V(T)} S_{v,w}S^\trans_{w,v} = \sum_{w \in V(T)} S_{v,w}S_{v,w} = \sum_{v \neq w \in V(T)} 1 = n-1
 \end{align*}
 since $S_{v,w} \in \{-1,1\}$ for all $v \neq w \in V(T)$, and $S_{v,v} = 0$ for all $v \in V(T)$.
 Hence, $n(n-1) = \tr(M) = \sum_{i = 1}^{n} \lambda_i$ where the second equality follows from \eqref{eq:trace-eigenvalue}.
\end{proof}

Let $F$ and $T$ be tournaments.
We write $\hom(F,T)$ for the number of homomorphisms from $F$ to $T$, i.e., the number of mappings $\varphi\colon V(F) \to V(T)$ such that $(\varphi(v),\varphi(w)) \in E(T)$ for all $(v,w) \in E(F)$.
Note that, since $F$ is a tournament and $T$ does not have self-loops, every homomorphism from $F$ to $T$ is injective.
In particular, $\hom(F,T) = 0$ if and only if $T$ is $F$-free.
We start by proving the following bound that is inspired by the study of quasi-random tournaments \cite{ChungG91,KalyanasundaramS13}.

\begin{lemma}
 \label{lem:hom-estimate}
 Let $F$ and $T$ be tournaments.
 Also define $k \coloneqq |V(F)|$, $m \coloneqq \binom{k}{2} = |E(F)|$ and $n \coloneqq |V(T)|$.
 Then
 \begin{equation}
  |\hom(F,T) - 2^{-m}n^k| \leq 2n^{k-1}\|B_T\|_2
 \end{equation}
 where $B_T \coloneqq A_T - \frac{1}{2}J_n$.
\end{lemma}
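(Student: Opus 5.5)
The plan is to expand $A_T = B_T + \tfrac12 J_n$ inside the homomorphism count. Every individual error term then costs only a factor $2^{-m+1}\,n^{k-1}\|B_T\|_2$, and there are fewer than $2^m$ such terms, which gives the constant $2$.

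\textbf{Step 1 (exact expansion).} Write $B \coloneqq B_T$. Then
\[\hom(F,T) = \sum_{\varphi\colon V(F)\to V(T)} \prod_{(a,b)\in E(F)} A_{\varphi(a),\varphi(b)},\]
where the sum ranges over \emph{all} maps $\varphi$. This is exact: $A_{v,v}=0$, so non-injective maps contribute $0$, and for injective maps the product is $1$ exactly when $\varphi$ is a homomorphism. Substitute $A_{v,w} = B_{v,w} + \tfrac12$ and expand the product over subsets $S \subseteq E(F)$. This gives
\[\hom(F,T) = \sum_{S\subseteq E(F)} 2^{-(m-|S|)}\, t_S, \qquad t_S \coloneqq \sum_{\varphi} \prod_{(a,b)\in S} B_{\varphi(a),\varphi(b)}.\]
The term $S=\emptyset$ equals $2^{-m}n^k$. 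It therefore suffices to show $|t_S| \le 2^{-(|S|-1)}\, n^{k-1}\|B\|_2$ for every nonempty $S$.

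\textbf{Step 2 (bounding one term).} Note that every entry of $B$ has absolute value exactly $\tfrac12$: the diagonal entries are $-\tfrac12$ and the off-diagonal entries are $\pm\tfrac12$.

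Fix an edge $(a,b)\in S$ and fix the images of all vertices other than $a$ and $b$. There are $n^{k-2}$ such partial assignments. Since $F$ is a tournament, $(a,b)$ is the only edge of $S$ joining $a$ and $b$. The remaining edges of $S$ split into three groups:
\begin{itemize}
\item edges avoiding both $a$ and $b$, which contribute a constant of absolute value $2^{-s_0}$;
\item edges incident to $a$ but not $b$, which contribute a factor $x_u$ depending only on $u = \varphi(a)$, with $|x_u| = 2^{-s_a}$;
\item edges incident to $b$ but not $a$, which likewise contribute a factor $y_{u'}$ depending only on $u' = \varphi(b)$, with $|y_{u'}| = 2^{-s_b}$.
\end{itemize}
Here $s_0+s_a+s_b = |S|-1$. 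The inner sum over $\varphi(a),\varphi(b)$ is therefore the constant times $x^\trans B y$.

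By the Cauchy--Schwarz inequality~\eqref{eq:cauchy-schwarz} and the definition of $\|B\|_2$,
\[|x^\trans B y| \le \|x\|_2\,\|B\|_2\,\|y\|_2 \le 2^{-s_a}\sqrt n \cdot \|B\|_2 \cdot 2^{-s_b}\sqrt n.\]
Multiplying by the constant factor $2^{-s_0}$ and summing over the $n^{k-2}$ outer assignments gives
\[|t_S| \le 2^{-(|S|-1)}\, n^{k-1}\|B\|_2.\]

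\textbf{Step 3 (summing).} Each nonempty $S$ contributes at most
\[2^{-(m-|S|)}\cdot 2^{-(|S|-1)}\, n^{k-1}\|B\|_2 = 2^{-m+1}\, n^{k-1}\|B\|_2.\]
There are $2^m-1$ nonempty subsets, so
\[|\hom(F,T) - 2^{-m}n^k| \le (2^m-1)\,2^{-m+1}\, n^{k-1}\|B\|_2 \le 2\,n^{k-1}\|B\|_2.\]

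\textbf{Main point of care.} The key step is Step 2. One must track the factors $\tfrac12$ carried by \emph{all} $B$-entries, not only the one edge absorbed into $\|B\|_2$. A cruder bound $|x_u|\le 1$ would only give $(3/2)^m$ in place of $2$. The exactness of Step 1 over all maps, including non-injective ones, is also essential; it relies on $A_{v,v}=0$ together with $B_{v,v}=-\tfrac12$.
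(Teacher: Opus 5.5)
Your proof is correct, but it is organized differently from the paper's.

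The paper never expands the product fully. It uses the telescoping identity $-2^{-m}+\prod_{i\in[m]}z_i=\sum_{i=1}^{m}2^{-m+i}\bigl(z_i-\tfrac12\bigr)\prod_{j<i}z_j$. This yields only $m$ error terms, each containing exactly one centered factor $B_{v_{a_i},v_{b_i}}$. The earlier edges keep their $0/1$ entries of $A$, so the vectors $f,g$ are $0/1$ vectors of norm at most $\sqrt n$. The later edges have been replaced by the constant $\tfrac12$, and the weights $2^{-m+i}$ sum to $2-2^{1-m}$.

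Your full multilinear expansion instead produces $2^m-1$ terms $t_S$, each possibly containing many $B$-factors. You pay for the exponentially many terms with the entrywise fact that every entry of $B$ (diagonal included) has modulus exactly $\tfrac12$. This lands on exactly the same constant, $(2^m-1)2^{1-m}=2-2^{1-m}$.

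Both arguments use the tournament structure of $F$ at the same point: $(a,b)$ is the only edge of $F$ between $a$ and $b$. This is what lets the remaining factors split as $f(v_p)g(v_q)$, respectively $x_u y_{u'}$.

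The paper's route is the standard counting-lemma template. It never needs any information about $B$ beyond $\|B\|_2$, since $B$ appears only in the single factor absorbed into the operator norm. Yours needs entrywise control of $B$, but in exchange gives the exact identity $\hom(F,T)=\sum_{S\subseteq E(F)}2^{-(m-|S|)}t_S$ in terms of centered counts. That is the more informative form if one wanted finer estimates.

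A minor point on your closing remark: the identity in Step~1 is simply the definition of $\hom(F,T)$ and does not depend on $A_{v,v}=0$. The diagonal value $B_{v,v}=-\tfrac12$ matters only in Step~2, where images of distinct vertices of $F$ may coincide.
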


\begin{proof}
 Let us write $A \coloneqq A_T$ and $B \coloneqq B_T$.
 Also, we may assume that $V(F) = [k]$.
 Finally, we fix an arbitrary enumeration $(a_1,b_1),\dots,(a_m,b_m)$ of the edges of $F$.
 We have that
 \[\hom(F,T) = \sum_{v_1,\dots,v_k \in V(T)} \prod_{i \in [m]} A_{v_{a_i},v_{b_i}}.\]
 Hence,
 \begin{equation}
  \label{eq:hom-estimate-1}
  \hom(F,T) - 2^{-m}n^k = \sum_{v_1,\dots,v_k \in V(T)}\left( - 2^{-m} + \prod_{i \in [m]} A_{v_{a_i},v_{b_i}} \right).
 \end{equation}

 \begin{claim}
  Let $m \geq 1$ and let $z_1,\dots,z_m \in \RR$. Then
  \[- 2^{-m} + \prod_{i \in [m]} z_i = \sum_{i=1}^{m} 2^{-m+i} \cdot \left(z_i - \frac{1}{2}\right) \cdot \prod_{j < i}z_j.\]
 \end{claim}
 \begin{claimproof}
  We prove the statement by induction on $m$.
  The base case $m = 1$ is trivial.
  For $m > 1$, we have that
  \begin{align*}
   \sum_{i=1}^{m} 2^{-m+i} \cdot \left(z_i - \frac{1}{2}\right) \cdot \prod_{j < i}z_j
   &= \left(z_m - \frac{1}{2}\right) \cdot \prod_{j < m}z_j ~~+~~ \frac{1}{2} \cdot \sum_{i=1}^{m-1} 2^{-(m-1)+i} \cdot \left(z_i - \frac{1}{2}\right) \cdot \prod_{j < i}z_j\\
   &= \left(z_m - \frac{1}{2}\right) \cdot \prod_{j < m}z_j ~~+~~ \frac{1}{2}\left(- 2^{-(m-1)} + \prod_{i \in [m-1]} z_i\right)\\
   &= - 2^{-m} ~~+~~ \frac{1}{2}\prod_{i \in [m-1]} z_i ~~-~~ \frac{1}{2}\prod_{j < m}z_j ~~+~~ z_m\prod_{j < m}z_j\\
   &= - 2^{-m} ~~+~~ \prod_{i \in [m]} z_i.\qedhere
  \end{align*}
 \end{claimproof}
 Plugging this into \eqref{eq:hom-estimate-1}, we obtain
 \begin{equation}
  \label{eq:hom-estimate-2}
  \hom(F,T) - 2^{-m}n^k ~~=~~ \sum_{i=1}^{m} 2^{-m+i} \sum_{v_1,\dots,v_k \in V(T)} \left(A_{v_{a_i},v_{b_i}} - \frac{1}{2}\right)\prod_{j < i} A_{v_{a_j},v_{b_j}}
 \end{equation}
 Now, let us consider some $i \in [m]$.
 We claim that
 \begin{equation}
  \label{eq:hom-estimate-3}
  \left|\sum_{v_1,\dots,v_k \in V(T)} \left(A_{v_{a_i},v_{b_i}} - \frac{1}{2}\right)\prod_{j < i} A_{v_{a_j},v_{b_j}}\right| ~~\leq~~ n^{k-1} \cdot \|B\|_2.
 \end{equation}
 First observe that this implies the lemma, since plugging \eqref{eq:hom-estimate-3} into \eqref{eq:hom-estimate-2} gives that
 \[|\hom(F,T) - 2^{-m}n^k| \leq \sum_{i=1}^{m} 2^{-m+i} \cdot n^{k-1} \cdot \|B\|_2 = n^{k-1} \cdot \|B\|_2 \cdot \sum_{i=1}^{m} 2^{-m+i} \leq 2 \cdot n^{k-1} \cdot \|B\|_2.\]
 So it remains to prove \eqref{eq:hom-estimate-3}.
 Let us fix some $i \in [m]$ and let $p \coloneqq a_i$, $q \coloneqq b_i$.
 Now, we can write
 \begin{equation}
  \label{eq:hom-estimate-4}
  \sum_{v_1,\dots,v_k \in V(T)} \left(A_{v_{a_i},v_{b_i}} - \frac{1}{2}\right)\prod_{j < i} A_{v_{a_j},v_{b_j}} ~~=~~ \sum_{\substack{v_s \in V(T)\\s=1,\dots,k\\p \neq s \neq q}} \sum_{v_p,v_q \in V(T)} B_{v_p,v_q} \prod_{j < i} A_{v_{a_j},v_{b_j}}.
 \end{equation}
 Now, let us fix vertices $v_s$ for every $s \in \{1,\dots,k\}$ with $p \neq s \neq q$.
 Then there are functions $f,g \colon V(T) \to \{0,1\}$ such that, for every $v_p,v_q \in V(T)$, we have
 \[\prod_{j < i} A_{v_{a_j},v_{b_j}} = f(v_p) \cdot g(v_q).\]
 Hence, interpreting $f,g$ as vectors in $\RR^{V(T)}$, we obtain that
 \begin{equation}
  \label{eq:hom-estimate-5}
  \sum_{v_p,v_q \in V(T)} B_{v_p,v_q} \prod_{j < i} A_{v_{a_j},v_{b_j}} ~~=~~ \sum_{v_p,v_q \in V(T)} f(v_p)B_{v_p,v_q}g(v_q) ~~=~~ f^\trans B g.
 \end{equation}
 By the Cauchy-Schwarz inequality, we conclude that
 \[|f^\trans B g| = |\langle f,Bg \rangle| \leq \|f\|_2 \cdot \|Bg\|_2 \leq \|f\|_2 \cdot \|B\|_2 \cdot \|g\|_2.\]
 Since $f,g$ are 0/1-vectors, it follows that $\|f\|_2 \leq \sqrt{n}$ and $\|g\|_2 \leq \sqrt{n}$.
 Hence, $|f^\trans B g| \leq n \cdot \|B\|_2$.
 Combining this with \eqref{eq:hom-estimate-4} and \eqref{eq:hom-estimate-5}, we obtain that
 \[\left|\sum_{v_1,\dots,v_k \in V(T)} \left(A_{v_{a_i},v_{b_i}} - \frac{1}{2}\right)\prod_{j < i} A_{v_{a_j},v_{b_j}}\right| ~~\leq~~ n^{k-2} \cdot n \cdot \|B\|_2 = n^{k-1} \cdot \|B\|_2\]
 proving \eqref{eq:hom-estimate-3}.
\end{proof}

Now, we use the last lemma to bound the dimension of the eigenspace for the largest eigenvalue of $M_T$.

\begin{lemma}
 \label{lem:eigenspace-dimension}
 Let $F$ be a fixed tournament.
 Also $T$ be an $F$-free tournament with $n \geq 2$ vertices.
 Let $\lambda \coloneqq \lambda_{\max}(M_T)$ denote the largest eigenvalue of $M_T$, and let $d$ denote its multiplicity (i.e., the dimension of its eigenspace).
 Then
 \[2 \leq d \leq K(F).\]
\end{lemma}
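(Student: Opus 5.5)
The lower bound $d \geq 2$ will come from the skew-symmetric structure, and the upper bound $d \leq K(F)$ from combining \Cref{lem:hom-estimate} with $\hom(F,T) = 0$ and the trace identity of \Cref{obs:bound-eigenvalue-sum}.

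\emph{Lower bound.} Since $n \geq 2$, \Cref{obs:bound-eigenvalue-sum} gives $\sum_i \lambda_i = n(n-1) > 0$. All eigenvalues are non-negative, so $\lambda = \lambda_{\max}(M_T) > 0$. Since $M_T = S_T S_T^\trans$ with $S_T$ skew-symmetric, \Cref{fact:even-multiplicity} says the non-zero eigenvalue $\lambda$ has even multiplicity. Hence $d \geq 2$.

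\emph{Upper bound.} The key step is an exact relation between $B \coloneqq B_T$ and $M_T$. Because $T$ is a tournament, $A + A^\trans = J_n - I_n$. It follows that
\[S_T = 2A - J_n + I_n = 2B + I_n, \qquad\text{so}\qquad B = \tfrac12 (S_T - I_n).\]
Expanding, and using $S_T + S_T^\trans = 0$, we get
\[BB^\trans = \tfrac14\bigl(S_T S_T^\trans - S_T - S_T^\trans + I_n\bigr) = \tfrac14 (M_T + I_n).\]
By \Cref{fact:eigenvalue-vs-norm}, this yields $\|B\|_2^2 = (\lambda+1)/4$.

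Now $T$ is $F$-free, so $\hom(F,T) = 0$. Then \Cref{lem:hom-estimate} gives $2^{-m} n^k \leq 2 n^{k-1} \|B\|_2$, that is, $\|B\|_2 \geq n/2^{m+1}$. Combining this with the previous paragraph,
\[\lambda + 1 \geq \frac{4n^2}{4^{m+1}} = \frac{n^2}{4^m}.\]

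Finally, the eigenvalue $\lambda$ appears $d$ times in the sum of \Cref{obs:bound-eigenvalue-sum} and all other eigenvalues are non-negative. Hence $d\lambda \leq n(n-1)$. We split into two cases.
\begin{itemize}
 \item If $n < 4^m$, then trivially $d \leq n < 4^m = K(F)$.
 \item If $n \geq 4^m$, then $\lambda \geq (n^2 - 4^m)/4^m \geq (n^2 - n)/4^m > 0$. Therefore $d \leq n(n-1)/\lambda \leq 4^m = K(F)$.
\end{itemize}

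Degenerate small $k$ is covered automatically. For $k \leq 2$, no tournament with $n \geq 2$ vertices is $F$-free, so the statement is vacuous.

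The main obstacle is getting the exact constant $K(F) = 4^m$ rather than a slightly weaker bound like $4^{m+1}$. A triangle-inequality estimate $\|B\|_2 \leq (\|S_T\|_2 + 1)/2$ loses a constant factor. This is why I would use the exact identity $BB^\trans = \frac14(M_T + I_n)$ together with the case split at $n = 4^m$.
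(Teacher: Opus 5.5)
Your proof is correct and follows the paper's argument essentially step for step. Both use the identity $BB^\trans = \frac14(M_T + I_n)$, the bound $\lambda_{\max}(M_T) \geq n^2/4^m - 1$ from \Cref{lem:hom-estimate}, the trace bound $d\lambda \leq n(n-1)$ with a case split at $n = K(F)$, and \Cref{fact:even-multiplicity} for $d \geq 2$. Your use of the trace identity to justify $\lambda > 0$ spells out a step the paper only asserts.
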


\begin{proof}
 Let us write $A \coloneqq A_T$, $B \coloneqq B_T = A_T - \frac{1}{2}J_n$, $S = S_T$ and $M \coloneqq M_T$.
 Also let $k \coloneqq |V(F)|$ and $m \coloneqq \binom{k}{2} = |E(F)|$
 Since $T$ is $F$-free, we conclude that $\hom(F,T) = 0$.
 Hence, by \Cref{lem:hom-estimate}, we obtain that
 \[2^{-m}n^k \leq 2 \cdot n^{k-1} \cdot \|B\|_2\]
 which implies that
 \begin{equation}
  \|B\|_2 \geq 2^{-(m+1)} \cdot n.
 \end{equation}
 Now, $J_n = A + A^\trans + I_n$, since $T$ is a tournament.
 Hence,
 \[B = A - \frac{1}{2}J_n = A - \frac{1}{2}(A + A^\trans + I_n) = \frac{1}{2}(A - A^\trans - I_n) = \frac{1}{2}(S - I_n).\]
 It follows that
 \[BB^\trans = \frac{1}{4}(S - I_n)(S^\trans - I_n) = \frac{1}{4}(SS^\trans - S - S^\trans + I_n) = \frac{1}{4}(M + I_n)\]
 where the last equality holds since $S = -S^\trans$.
 Using \Cref{fact:eigenvalue-vs-norm}, we conclude that
 \[\frac{1}{4}\left(\lambda_{\max}(M) + 1\right) = \lambda_{\max}\left(\frac{1}{4}(M + I_n)\right) = \lambda_{\max}(BB^\trans) = \|B\|_2^2.\]
 It follows that
 \[\lambda_{\max}(M) = 4\|B\|_2^2 - 1 \geq 4 \cdot (2^{-(m+1)} \cdot n)^2 - 1 = \frac{n^2}{4^m} - 1 = \frac{n^2}{K(F)} - 1\]
 By \Cref{obs:bound-eigenvalue-sum}, we obtain that
 \[d \cdot \lambda_{\max}(M) \leq n(n-1).\]
 If $K(F) \leq n$, then we conclude that
 \[d \leq n(n-1) \frac{K(F)}{n^2 - K(F)} = K(F) \frac{n^2 - n}{n^2 - K(F)} \leq K(F).\]
 Otherwise, $K(F) > n$ and clearly $d \leq n < K(F)$.

 Finally, recall that $M = SS^\trans$ and $S$ is a skew-symmetric matrix.
 Hence, since $\lambda_{\max}(M) \neq 0$, we conclude that $d \geq 2$ by \Cref{fact:even-multiplicity}.
\end{proof}

\subsection{Weisfeiler-Leman and Spectral Invariants}

In \Cref{lem:eigenspace-dimension} we have established that the eigenspace corresponding to the largest eigenvalue of $T$ has small dimension.
To exploit this fact for the proof of \Cref{lem:main-structural}, we still need to establish that certain properties associated with the eigenspace are ``encoded'' the $2$-WL coloring.

Let us remark at this point that, as an alternative approach, one could also work directly with the required spectral invariants.
However, at this point, one starts to encounter certain numerical issues since all eigenvalues, eigenvectors and projectors need to be computed to a sufficiently large precision.
To avoid such issues, we follow a strategy similar to \cite{Furer95} and instead rely on the purely combinatorial $2$-WL algorithm which is known to capture many spectral invariants (see, e.g., \cite{Furer95,Furer10,RattanS23}).
In particular, we obtain the following lemma.

\begin{lemma}
 \label{lem:wl-projection-distance}
 Let $T$ be a $2$-WL-homogeneous, arc-colored tournament, and let $M \coloneqq M_T$.
 Let $P$ be the projector onto $E_\lambda$ where $\lambda = \lambda_{\max}(M)$ is the largest eigenvalue of $M$.
 Also, for $x \in V(T)$, let
 \[p_x \coloneqq P \cdot e_x \in \RR^{n}.\]
 Then
 \[\WL{2}{T}(x,y) = \WL{2}{T}(v,w) \quad\implies\quad \|p_x - p_y\|_2 = \|p_v - p_w\|_2\]
 for every $x,y,v,w \in V(T)$.
\end{lemma}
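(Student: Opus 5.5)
The plan is to show that the projector $P$ lies in the coherent algebra $\Alg(T)$ and then to read the distances off entries of $P$ via \Cref{fact:wl-coherent-algebra}.

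First, $\Alg(T)$ is closed under matrix addition, multiplication and scalar multiplication, and it contains $A_T$, $A_T^\trans$ and $I_n$. Hence $S_T = A_T - A_T^\trans$ lies in $\Alg(T)$, and so does $M = S_T S_T^\trans = S_T \cdot (A_T^\trans - A_T)$.

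For an arc-colored $T$, the relevant algebra is the one generated by the color-class adjacency matrices. These sum to $A_T$, so the same conclusion holds. Alternatively, one can work directly with the $2$-WL coloring: the matrices that are constant on $2$-WL color classes form an algebra containing $A_T$ and closed under products. This is the standard consequence of the $2$-WL refinement rule, and \Cref{fact:wl-coherent-algebra} covers it.

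Next, by \Cref{fact:projector}, $P$ is a polynomial in $M$ with real coefficients:
\[P = \prod_{j\ge 2} \frac{1}{\lambda_1-\lambda_j}(M-\lambda_j I_n).\]
The coefficients depend only on the spectrum of $M$. Since $\Alg(T)$ contains $I_n$ and is closed under the algebra operations, $P \in \Alg(T)$. By \Cref{fact:wl-coherent-algebra}, if $\WL{2}{T}(x,y)=\WL{2}{T}(v,w)$ then $P_{x,y}=P_{v,w}$. Applying the same fact to diagonal pairs, and using that $\WL{2}{T}(x,y)$ determines $\WL{2}{T}(x,x)$ and $\WL{2}{T}(y,y)$, gives $P_{x,x}=P_{v,v}$ and $P_{y,y}=P_{w,w}$. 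Here $2$-WL-homogeneity makes this immediate, since all diagonal entries of $P$ are equal.

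Finally, $P$ is symmetric and idempotent because it is an orthogonal projector onto an eigenspace of a symmetric matrix. Therefore
\[\langle p_x,p_y\rangle = e_x^\trans P^\trans P e_y = P_{x,y}.\]
It follows that
\[\|p_x-p_y\|_2^2 = P_{x,x}+P_{y,y}-2P_{x,y},\]
which equals $\|p_v-p_w\|_2^2$ by the equalities above. Taking square roots gives the claim.

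The only delicate step is the arc-colored case: we must make sure the algebra used in \Cref{fact:wl-coherent-algebra} is the one respecting the arc colors. It still contains $A_T$ and $A_T^\trans$, so $M$ and $P$ belong to it. One should also note that symmetry of $P$ follows from $P$ being a polynomial in the symmetric matrix $M$, and idempotence from the spectral decomposition.
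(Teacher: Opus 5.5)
Your proof is correct, but it reads the distance off $P$ differently from the paper. Both arguments start from $P \in \Alg(T)$ and use \Cref{fact:wl-coherent-algebra}.

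The paper then goes back to the $2$-WL refinement step. Equality $\WL{2}{T}(x,y)=\WL{2}{T}(v,w)$ gives a bijection $f$ of $V(T)$ with $\big(P_{z,y},P_{z,x}\big)=\big(P_{f(z),w},P_{f(z),v}\big)$ for all $z$, and the paper then compares the sums $\sum_z (P_{z,x}-P_{z,y})^2$ term by term.

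You instead use that $P$ is the orthogonal projector, which is symmetric and idempotent. This holds because the polynomial in \Cref{fact:projector} evaluates to $1$ at $\lambda_1$ and to $0$ at every other eigenvalue. Hence $\|p_x-p_y\|_2^2 = P_{x,x}+P_{y,y}-2P_{x,y}$ involves only three entries of $P$, each controlled directly by \Cref{fact:wl-coherent-algebra}.

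What each route buys:
\begin{itemize}
\item Your route is shorter and avoids the bijection argument.
\item The paper's route uses nothing about $P$ beyond membership in $\Alg(T)$. So it shows more generally that column distances of any matrix in the coherent algebra are $2$-WL invariant. You could recover this too by applying your three-entry formula to $P^\trans P$, which lies in $\Alg(T)$ because the algebra is closed under transposition.
\item As you note, neither argument actually needs $2$-WL-homogeneity.
\end{itemize}

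On the arc-colored case, there is a slightly cleaner fix than switching to the color-class algebra. The $2$-WL coloring of the arc-colored $T$ refines that of the underlying uncolored tournament, while $M$ and $P$ depend only on the latter. So \Cref{fact:wl-coherent-algebra} applies exactly as stated.
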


\begin{proof}
 By definition (see \Cref{fact:projector}), we have that $P \in \Alg(T)$.
 Let $x,y,v,w \in V(T)$ such that $\WL{2}{T}(x,y) = \WL{2}{T}(v,w)$.
 By definition, this implies that
 \begin{align*}
     &\Big\{\!\Big\{ \big(\WL{2}{T}(z,y),\WL{2}{T}(x,z)\big) ~\Big|~ z \in V(T) \Big\}\!\Big\}\\
  =~~&\Big\{\!\Big\{ \big(\WL{2}{T}(z,w),\WL{2}{T}(v,z)\big) ~\Big|~ z \in V(T) \Big\}\!\Big\}
 \end{align*}
 Hence, there is a bijection $f\colon V(T) \to V(T)$ such that
 \begin{equation*}
  \big(\WL{2}{T}(z,y),\WL{2}{T}(x,z)\big) ~~=~~ \big(\WL{2}{T}(f(z),w),\WL{2}{T}(v,f(z))\big)
 \end{equation*}
 for all $z \in V(T)$.
 By the properties of the $2$-WL algorithm, we also obtain that
 \begin{equation*}
  \big(\WL{2}{T}(z,y),\WL{2}{T}(z,x)\big) ~~=~~ \big(\WL{2}{T}(f(z),w),\WL{2}{T}(f(z),v)\big)
 \end{equation*}
 By \Cref{fact:wl-coherent-algebra}, we conclude that
 \[\big(P_{z,y},P_{z,x}\big) ~~=~~ \big(P_{f(z),w},P_{f(z),v}\big)\]
 for all $z \in V(T)$.
 Now, we get that
 \[\|p_x - p_y\|_2^2 = \sum_{z \in V(T)} (P_{z,x} - P_{z,y})^2 = \sum_{z \in V(T)} (P_{f(z),v} - P_{f(z),w})^2 = \sum_{z \in V(T)} (P_{z,v} - P_{z,w})^2 = \|p_v - p_w\|_2^2.\]
 Taking the square-root on the both sides gives the desired result.
\end{proof}

\subsection{Proof of Main Structural Lemma}

Now, we are ready the prove the main structural lemma.

\begin{proof}[Proof of \Cref{lem:main-structural}]
 We prove the statement by induction on $n$.
 The base case $n = 1$ is trivial.

 So suppose that $n \geq 2$.
 Let $M \coloneqq M_T$ and let $\lambda \coloneqq \lambda_{\max}(M)$.
 Also let $P$ be the projector onto $E_\lambda$.
 Also, for $x \in V(T)$, let
 \[p_x \coloneqq P \cdot e_x \in \RR^{n}.\]
 We define
 \begin{equation}
  \delta \coloneqq \min_{x \not\sim_\CQ y} \| p_x - p_y\|_2.
 \end{equation}
 Now, we distinguish two cases.
 \begin{description}
  \item[Case $\delta > 0$:]
   Pick $x,y \in V(T)$ such that $x \not\sim_\CQ y$, $(x,y) \in E(T)$ and $\delta = \| p_x - p_y\|_2$.
   We set $c \coloneqq \WL{2}{T}(x,y)$.
   Then $c$ is a cross-$\CQ$ color by definition.
   It remains to verify \eqref{eq:main-structural}.
   Let $v \in V(T)$ and suppose that
   \[\{Q \in \CQ \mid \exists w \in Q\colon \WL{2}{T}(v,w) = c\} = \{Q_1,\dots,Q_\ell\}.\]
   For every $i \in [\ell]$ fix an arbitrary $w_i \in Q_i$ such that $\WL{2}{T}(v,w_i) = c$.
   Let
   \[p_0 \coloneqq p_v \quad\text{and}\quad p_i \coloneqq p_{w_i} \text{ for every } i \in [\ell].\]
   Observe that
   \begin{equation}
    \label{eq:point-distance-1}
    \|p_0 - p_i\|_2 = \delta
   \end{equation}
   for every $i \in [\ell]$ by \Cref{lem:wl-projection-distance}.
   Also, by the definition of $\delta$, we conclude that
   \begin{equation}
    \label{eq:point-distance-2}
    \|p_i - p_j\|_2 \geq \delta
   \end{equation}
   for all $i \neq j \in [\ell]$, since $w_i \not\sim_\CQ w_j$.
   Since $p_0,\dots,p_\ell \in \RR^{n}$ span a space of dimension at most $K(F)$, we can now use a geometric packing argument to bound the number of points.

   For $i \in \{0,\dots,\ell\}$ let
   \[U_i \coloneqq \{v \in \im(P) \mid \|p_i - v\| < \delta/2\}.\]
   Note that $U_i \cap U_j = \emptyset$ for all distinct $i,j \in \{0,\dots,\ell\}$ by \eqref{eq:point-distance-1} and \eqref{eq:point-distance-2}.
   Also, let
   \[U_0^{\text{large}} \coloneqq \{v \in \im(P) \mid \|p_0 - v\| < 3\delta/2\}.\]
   Then $U_i \subseteq U_0^{\text{large}}$ for all $i \in \{0,\dots,\ell\}$ by \eqref{eq:point-distance-1}.
   Now, let $d$ denote the dimension of the vector space $\im(P)$, and let $\omega_d$ denote the volume of a $d$-dimensional ball with radius $1$.
   Then $U_i$ has volume $\omega_d \cdot (\delta/2)^d$ for every $i \in \{0,\dots,\ell\}$, and $U_0^{\text{large}}$ has volume $\omega_d \cdot (3\delta/2)^d$.
   All together, it follows that
   \[(\ell + 1) \cdot \omega_d \cdot (\delta/2)^d \leq \omega_d \cdot (3\delta/2)^d\]
   which implies that
   \[\ell + 1 \leq 3^d.\]
   Since $d \leq K(F)$ by \Cref{lem:eigenspace-dimension}, we conclude that $\ell \leq 3^{K(F)} - 1 = D(F)$.
  \item[Case $\delta = 0$:]
   For two vertices $v,w \in V(T)$ we define $v \sim_{\CE} w$ if $p_v = p_w$.
   Note that $\sim_{\CE}$ is an equivalence relation and let $\CE$ denote the partition of $V(T)$ into equivalence classes.
   Also observe that $\CE$ is $\WL{2}{T}$-definable by \Cref{lem:wl-projection-distance}.

   Since $\delta = 0$, there are vertices $x,y \in V(T)$ such that $x \not\sim_\CQ y$, but $x \sim_\CE y$.
   Let $X \in \CE$ denote the partition class such that $x,y \in X$.
   We first argue that $X \neq V(T)$.
   Indeed, if $X = V(T)$, then $p_v = p_w$ for all $v \in V(T)$, i.e., $\im(P)$ is 1-dimensional space.
   However, this contradicts \Cref{lem:eigenspace-dimension} stating that the dimension of $\im(P)$ is at least $2$.

   Now define $S = (T[X],\lambda_S)$ to be the subtournament induced by $X$ with arc-coloring $\lambda_S(v,w) \coloneqq \WL{2}{T}(v,w)$ for all $(v,w) \in E(S) = E(T[X])$.
   Clearly, $S$ is also $F$-free since it is an induced subtournament of $T$.
   Also, $S$ is $2$-WL-homogeneous and $\CQ \cap X$ is $\WL{2}{S}$-definable by \Cref{fact:wl-partition}.
   Finally, $|\CQ \cap X| \geq 2$ since $x \not\sim_\CQ y$.

   By the induction hypothesis, there is a cross-$(\CQ \cap X)$-color $c_S \in \{\WL{2}{S}(x,y) \mid (x,y) \in E(S)\}$ such that
   \[|\{Q \in \CQ \cap X \mid \exists w \in Q\colon \WL{2}{S}(v,w) = c_S\}| \leq D(F)\]
   for every $v \in V(S) = X$.
   Fix some $x,y \in X$ such that $\WL{2}{S}(x,y) = c_S$; we set $c \coloneqq \WL{2}{T}(x,y)$.

   For every $y' \in V(T)$ such that $c \coloneqq \WL{2}{T}(x,y')$, we have that $y' \in X$, since $\CE$ is $\WL{2}{T}$-definable, and $\WL{2}{S}(x,y') = c_S$ by \Cref{fact:wl-partition}.
   It follows that
   \[|\{Q \in \CQ \mid \exists w \in Q\colon \WL{2}{T}(x,w) = c\}| \leq D(F),\]
   i.e., \eqref{eq:main-structural} is satisfied for the vertex $x$.
   Since $T$ is $2$-WL-homogeneous and $\CQ$ is $\WL{2}{T}$-definable, we conclude that
   \[|\{Q \in \CQ \mid \exists w \in Q\colon \WL{2}{T}(x,w) = c\}| = |\{Q \in \CQ \mid \exists w \in Q\colon \WL{2}{T}(v,w) = c\}|\]
   for all $v \in V(T)$.
   Hence, \eqref{eq:main-structural} is satisfied for all $v \in V(T)$.\qedhere
 \end{description}
\end{proof}

\section{Isomorphism Tests for Hereditary Classes}

In this section, we give an FPT isomorphism test for $F$-free tournaments based on \Cref{lem:main-structural}.
We stress that, building on \Cref{lem:main-structural}, the algorithm is essentially identical to the FPT algorithm from \cite{GroheN26}.
In particular, we rely on the following lemma from \cite{GroheN26}.
Let $T = (V,E,\lambda)$ be an arc-colored tournament, and let $\CQ$ be a partition.
For a set of colors $C$ in the image of $\lambda$, we say that \emph{$\CQ$ is $\lambda$-defined by $C$} if, for every $(v,w) \in E(T)$ we have that $v \sim_\CQ w$ if and only if $\lambda(v,w) \in C$.
Observe that $\CQ$ is uniquely determined by $C$, since $T$ is a tournament.
If $\CQ$ is $\lambda$-defined by $C$, then every other color in the image of $\lambda$ is called a \emph{cross-$\CQ$ color}.

\begin{lemma}[{\cite[Lemma 4.1]{GroheN26}}]
 \label{lem:lift-isomorphisms}
 There is an algorithm that, given
 \begin{enumerate}[label = (\Alph*)]
  \item\label{item:lift-isomorphisms-1} an integer $d \geq 1$;
  \item\label{item:lift-isomorphisms-2} two arc-colored tournaments $T_1=(V_1,E_1,\lambda_1)$ and $T_2=(V_2,E_2,\lambda_2)$;
  \item\label{item:lift-isomorphisms-3} a set of colors $C$ and for $j=1,2$ a partition $\CQ_j$ of $V_j$ that is $\lambda_j$-defined by $C$;
  \item\label{item:lift-isomorphisms-4} a color $c^*$ that is a cross-$\CQ_j$ color for both $j=1,2$ and
   \begin{itemize}
    \item for every $v \in V_j$ it holds that
     \[\big|\big\{Q \in \CQ_j ~\big|~ \exists w \in Q\colon (v,w) \in E_j \wedge \lambda_j(v,w) = c^*\big\}\big| \leq d;\]
    \item for
     \[F_j \coloneqq \big\{(Q,Q')\in\CQ_j^2 ~\big|~ Q\neq Q', \exists w \in Q,w' \in Q'\colon (w,w') \in E_j \wedge \lambda_j(w,w') = c^*\big\}\]
     the directed graph $G_j = (\CQ_j,F_j)$ is strongly connected;
   \end{itemize}
  \item\label{item:lift-isomorphisms-5} $\Iso\big(T_j[Q],T_{j'}[Q']\big)$ for every $j,j' \in \{1,2\}$ and every $Q \in \CQ_j$, $Q' \in \CQ_{j'}$,
 \end{enumerate}
 computes $\Iso(T_1,T_2)$ in time $d^{O(\log d)} \cdot n^{O(1)}$.
\end{lemma}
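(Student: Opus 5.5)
This lemma is quoted from \cite[Lemma 4.1]{GroheN26}, so in the paper it is invoked rather than reproved. If I had to prove it, I would follow the group-theoretic strategy of Babai--Luks and Arvind--Ponomarenko--Ryabov.

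\textbf{Step 1: trivial quotient.} If $|\CQ_j| = 1$, item~\ref{item:lift-isomorphisms-5} already gives $\Iso(T_1,T_2)$, so assume $|\CQ_j| \geq 2$.

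\textbf{Step 2: a near-bounded-degree quotient and its solvable group.} Fix a root block $Q_1 \in \CQ_1$ and try each candidate image $Q_2 \in \CQ_2$. Since $G_j$ is strongly connected with out-degree at most $d$, do a breadth-first search from the root. This defines a chain of sets $R_0 \subseteq R_1 \subseteq \dots$ of blocks, where $R_{i+1}$ adds the $c^*$-out-neighbours of blocks in $R_i$.

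At each step I would build the group of bijections of $\bigcup R_{i+1}$ that
\begin{itemize}
\item restrict to block-isomorphisms (taken from item~\ref{item:lift-isomorphisms-5}) on each block, and
\item respect the arc colours between already-placed blocks.
\end{itemize}
The new blocks reachable from a vertex $v$ number at most $d$. So the extension from level $i$ to level $i+1$ is governed by how each vertex permutes at most $d$ labelled blocks. Following Babai--Luks, the relevant groups are subgroups of wreath products of groups in the class $\Gamma_d$, that is, groups whose nonabelian composition factors embed in $S_d$. For such groups, string isomorphism and set-stabiliser problems run in time $d^{O(\log d)} n^{O(1)}$.

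The block automorphism groups are solvable by \Cref{thm:aut-solvable}. Hence every composition factor of the groups arising lies in $\Gamma_d$.

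\textbf{Step 3: stitching levels together.} At each level, compute the subgroup (coset) that also preserves the arc colours between the new blocks and the old ones. This is a string/colour isomorphism problem for a $\Gamma_d$ group. Strong connectivity ensures that after at most $|\CQ_j|$ rounds all blocks are covered. The final coset restricted to colour-preserving bijections is $\Iso(T_1,T_2)$ for that root choice; taking the union over the $|\CQ_2|$ candidate roots gives the full answer.

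\textbf{Main obstacle.} The hard part is controlling the group structure when a vertex $v$ has neighbours in at most $d$ blocks but possibly many vertices inside each such block. The permutation action then does not directly act on a set of size $d$. The fix I would try first is to pass to the action on pairs (vertex, incident $c^*$-block), together with the \emph{bipartite} structure between consecutive levels. The goal is to show that the induced group on the set of new blocks, given the old level, has all its composition factors either from the solvable block groups or from $S_d$. That would keep us inside $\Gamma_d$, and the Babai--Luks $\Gamma_d$ machinery then yields the $d^{O(\log d)}n^{O(1)}$ bound.
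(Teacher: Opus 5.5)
\textbf{There is a genuine gap, at exactly the step that is supposed to deliver the running time.} The paper imports this lemma from \cite{GroheN26} without proof, so I compare your sketch with what such a proof has to achieve.

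You claim that for groups in $\Gamma_d$ (and the wreath-type groups you build from solvable block groups and symmetric groups on at most $d$ blocks), string isomorphism and set stabilisers run in time $d^{O(\log d)} n^{O(1)}$ ``following Babai--Luks''. The Luks/Babai--Luks machinery does not give this. It gives $n^{O(d)}$, and the best known bounds (via Babai--Cameron--P\'alfy and later work) are $n^{(\log d)^{O(1)}}$, which is only XP in $d$. Your groups are essentially those arising in Luks' bounded-degree algorithm: a solvable group extended by products of symmetric groups of degree at most $d$. A generic $f(d)\,n^{O(1)}$ algorithm for your stitching step would therefore make isomorphism of graphs of maximum degree $d$ fixed-parameter tractable, which is a well-known open problem. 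As written, your construction gives an XP algorithm, and with $d = D(F)$ it does not yield \Cref{thm:main}.

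A smaller error: item~\ref{item:lift-isomorphisms-4} bounds the number of target blocks per \emph{vertex}, not per block. So $G_j$ need not have out-degree at most $d$; a block $Q$ may send $c^*$-arcs into up to $d|Q|$ blocks. New blocks must therefore be grouped by their set of parent \emph{vertices}, not parent blocks.

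\textbf{The missing idea.} Use \Cref{thm:aut-solvable} not just to place composition factors in $\Gamma_d$, but to keep every overgroup solvable, so that \Cref{thm:gi-solvable-group} does the stitching in polynomial time.
\begin{enumerate}
\item A family of new blocks with a common set $P$ of parent vertices has at most $d$ members, since all of them are hit by any single $v \in P$.
\item Do not let the full symmetric group of such a family act; this is what introduces the $S_d$ factors. Instead, attach to each family the coset of actual isomorphisms between the subtournaments induced on the families. These are cosets of odd-order groups.
\item First cut the current coset down to the elements mapping parent sets to parent sets of isomorphic families. This is a hypergraph-isomorphism problem under a solvable group, which is polynomial.
\item The overgroup for the next layer then has a normal subgroup of odd order with solvable quotient, hence is solvable, and \Cref{thm:gi-solvable-group} applies.
\end{enumerate}
The only non-polynomial cost is the local computation on families of at most $d$ blocks. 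Brute force over the at most $d!$ block bijections, each followed by \Cref{thm:gi-solvable-group}, already gives FPT. However, with $d = D(F)$ this is triple-exponential in $k^2$. Reaching $d^{O(\log d)}$, which the bound $2^{2^{O(k^2)}}$ requires, needs a quasi-polynomial, tournament-specific argument for that local step.

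Your ``main obstacle'' paragraph goes the other way, aiming to stay inside $\Gamma_d$. With known techniques that cannot deliver the claimed bound.
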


Now, we are ready to describe the FPT isomorphism test for $F$-free tournaments.
We first obtain an algorithm for $2$-WL-homogeneous $F$-free tournaments, and then generalize it to arbitrary $F$-free tournaments.

\begin{lemma}
 \label{lem:isomorphism-unicolored}
 There is an algorithm that, given a tournament $F$ and $2$-WL-homogeneous tournaments $T_1,T_2$, either concludes that $T_1$ is not $F$-free or computes $\Iso(T_1,T_2)$ in time $2^{2^{O(k^2)}} \cdot n^{O(1)}$, where $k \coloneqq |V(F)|$ and $n \coloneqq \max(|V(T_1)|,|V(T_2)|)$.
\end{lemma}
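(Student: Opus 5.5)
We prove \Cref{lem:isomorphism-unicolored} by recursion on the number of vertices, following the template of \cite{GroheN26} and replacing the twin-width argument by \Cref{lem:main-structural}. More precisely, we design a recursive procedure that receives $2$-WL-homogeneous arc-colored tournaments $T_1,T_2$. Each vertex set is equipped with a $\WL{2}{T_j}$-definable partition $\CQ_j$, and the two partitions are defined by the same color set $C$. The procedure computes $\Iso(T_1,T_2)$. We start with $\CQ_j$ the partition into singletons, defined by the diagonal color; $2$-WL-homogeneity makes this partition definable. Before doing anything, we run $2$-WL on the disjoint union of $T_1$ and $T_2$. If $T_1 \not\simeq_2 T_2$, we return $\emptyset$. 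Otherwise, color classes and definable partitions correspond in the two tournaments. Thus every choice made from colors, such as the color $c$ or a color set $C$, transfers to $T_2$ in an isomorphism-invariant way.

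The core step is a coarsening loop. Suppose the current partition $\CQ_j$ has $|\CQ_j| \geq 2$. We check whether some cross-$\CQ_j$ arc color $c$ satisfies the degree bound \eqref{eq:main-structural} with $d = D(F)$. Such a $c$ exists by \Cref{lem:main-structural} whenever $T_1$ is $F$-free. If no such color exists, we output that $T_1$ is not $F$-free. Whether $T_2$ is $F$-free does not matter: the two tournaments are $2$-WL-equivalent, so the relevant counts agree, and if $c$ works for $T_1$ it also works for $T_2$. Next we form the quotient digraph $G_j$ on $\CQ_j$ induced by the $c$-colored arcs. Its strongly connected components give a coarser partition $\CQ_j'$. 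This partition is $\WL{2}{T_j}$-definable, since reachability along $c$-arcs and along arcs inside parts is expressible in the coherent algebra by \Cref{fact:wl-coherent-algebra}.

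\begin{itemize}
\item If $G_j$ is strongly connected, then $\CQ_j' = \{V_j\}$, and we are in the situation of \Cref{lem:lift-isomorphisms}.
\item Otherwise, we recurse into each part $X \in \CQ_j'$, using \Cref{fact:wl-partition} to obtain the induced structure $S$ together with the definable partition $\CQ_j \cap X$. Within $S$ the quotient is strongly connected. Here one needs that a strongly connected component of a $2$-WL-homogeneous quotient restricts well; in a regular quotient digraph the strongly connected components coincide with the weak components, which makes this work.
\end{itemize}

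Concretely, the recursion is organized as in \cite{GroheN26}. We maintain a chain of definable partitions from singletons up to $\{V_j\}$. Each step merges the parts of the previous partition along a color of bounded degree $D(F)$ into strongly connected pieces. The chain has length at most $n$ because every step strictly coarsens, and it is computed identically for both tournaments. We then compute isomorphism sets bottom-up. For singletons the sets $\Iso(T_j[Q],T_{j'}[Q'])$ are trivial. At each level and for each merged block, we apply \Cref{lem:lift-isomorphisms} with $d = D(F)$. Its input \ref{item:lift-isomorphisms-5} is available from the previous level, computed for all pairs of blocks, including pairs within the same tournament. The strong connectivity hypothesis of \Cref{lem:lift-isomorphisms} holds by construction, and the color $c^*$ is the selected $c$ with arc colors given by $\WL{2}{T_j}$.

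Running time. Each call of \Cref{lem:lift-isomorphisms} costs $D(F)^{O(\log D(F))} n^{O(1)}$. Since $D(F) = 2^{2^{O(k^2)}}$, this is still $2^{2^{O(k^2)}} n^{O(1)}$. There are polynomially many levels and block pairs. Checking the degree condition for all colors takes polynomial time after running $2$-WL.

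The main obstacle is definability and consistency under coarsening. One must verify that each intermediate partition stays $\WL{2}{T_j}$-definable and is defined by the same color set in both tournaments, and that for every block the degree bound with respect to its subparts continues to hold. In the recursive setting, \Cref{lem:main-structural} must be applied to the arc-colored induced subtournament $S$ on a block, with respect to $\CQ \cap X$, rather than to $T$ itself. I would handle this by always working with the arc coloring inherited from $\WL{2}{T}$ and invoking \Cref{fact:wl-partition} to transfer definability and homogeneity, exactly as in the $\delta = 0$ case of the proof of \Cref{lem:main-structural}.
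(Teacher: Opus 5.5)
Your proposal is correct and, in the concrete form you settle on, it is essentially the paper's proof: a chain of $\WL{2}{T_j}$-definable partitions from singletons up to $\{V_j\}$, each step merging parts along a cross color of degree at most $D(F)$ obtained from \Cref{lem:main-structural} for $T_1$ and transferred to $T_2$ via $T_1 \simeq_2 T_2$, followed by bottom-up calls to \Cref{lem:lift-isomorphisms}. The obstacle you raise at the end does not arise, because the paper applies \Cref{lem:main-structural} only to the whole $T_1$ with the current global partition $\CQ_{1,i}$, and each block's degree bound with respect to its subparts is inherited directly from this global bound (the subparts form a subset of $\CQ_{1,i}$), so no per-block recursion or application of the lemma to induced subtournaments is needed.
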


\begin{proof}
 We run $2$-WL on $T_1$ and $T_2$.
 If $2$-WL distinguishes between $T_1$ and $T_2$, then we return that $\Iso(T_1,T_2) = \emptyset$.
 So suppose that $T_1 \simeq_2 T_2$.

 For $j \in \{1,2\}$ we define an arc-coloring via $\lambda_j(v,w) \coloneqq \WL{2}{T_j}(v,w)$ for every $(v,w) \in E(T)$.
 We write $\widehat{T}_j = (V(T_j),E(T_j),\lambda_j)$ for the corresponding arc-colored version of $T_j$.
 Observe that $\Iso(T_1,T_2) = \Iso(\widehat{T}_1,\widehat{T}_2)$.

 For $j \in \{1,2\}$, we define $\CQ_{j,0} \coloneqq \{\{v\} \mid v \in V(T_j)\}$ to be the partition into singletons.
 Both partitions are trivially $\lambda_j$-defined by the empty set of colors.
 Now, suppose the partition $\CQ_{1,i}$ and $\CQ_{2,i}$ for some $i \geq 0$ have already been defined, and $|\CQ_{1,i}| \geq 2$.
 Also, suppose both sets are $\lambda_j$-defined by the set $C_i$.
 By \Cref{lem:main-structural}, there is a cross-$\CQ_{1,i}$ color $c_{i+1}$ such that, for every $v \in V(T_1)$,
 \[|\{Q \in \CQ_{1,i} \mid \exists w \in Q \colon \WL{2}{T_1}(v,w) = c_{i+1}\}| \leq D(F).\]
 We fix an arbitrary such color $c_{i+1}$ (if no such color exists, we return that $T_1$ is not $F$-free).
 Finally, for $j \in \{1,2\}$, we define $\CQ_{j,i+1}$ as the partition into the connected components of the graph with vertex set $V(T_j)$ and all edges $(v,w) \in E(T_j)$ such that $\lambda_j(v,w) \in C_i \cup \{c_{i+1}\}$.
 By the properties of the $2$-WL algorithm, $\CQ_{j,i}$ is strictly finer than $\CQ_{j,i+1}$ and $\CQ_{j,i+1}$ (for both $j \in \{1,2\}$) is $\lambda_j$-defined by some set of colors $C_{i+1} \supseteq C_i \cup \{c_{i+1}\}$.
 Moreover, since $T_1 \simeq_2 T_2$, we obtain that
 \[|\{Q \in \CQ_{2,i} \mid \exists w \in Q \colon \WL{2}{T_2}(v,w) = c_{i+1}\}| \leq D(F).\]
 We stop this process for the minimal $\ell \geq 0$ such that $|\CQ_{1,\ell}| = 1$.

 Next, for every $i \in \{0,\dots,\ell\}$, every $j,j' \in \{1,2\}$, and every $Q \in \CQ_{j,i}$, $Q' \in \CQ_{j',i}$ we inductively compute the set $\Iso(\widehat{T}_j[Q],\widehat{T}_{j'}[Q'])$.
 For $i = 0$ this trivial since $|Q| = |Q'| = 1$.

 So suppose $i \in \{0,\dots,\ell-1\}$, $j,j' \in \{1,2\}$, and $Q \in \CQ_{j,i+1}$, $Q' \in \CQ_{j',i+1}$.
 Let $d \coloneqq D(F)$.
 We set $\widetilde{T}_1 \coloneqq \widehat{T}_j[Q]$ and $\widetilde{T}_2 \coloneqq \widehat{T}_{j'}[Q']$.
 Also, we set
 \[\widetilde{\CQ}_1 \coloneqq \{\widetilde{Q} \in \CQ_{j,i} \mid \widetilde{Q} \subseteq Q\}\]
 and
 \[\widetilde{\CQ}_2 \coloneqq \{\widetilde{Q} \in \CQ_{j',i} \mid \widetilde{Q} \subseteq Q'\}.\]
 Note that $\widetilde{\CQ}_1$ is a partition of $V(\widetilde{T}_1)$ and $\widetilde{\CQ}_2$ is a partition of $V(\widetilde{T}_2)$.
 Finally, we set $c^* \coloneqq c_{i+1}$ and apply \Cref{lem:lift-isomorphisms} to compute the set
 \[\Iso(\widehat{T}_j[Q],\widehat{T}_{j'}[Q']) = \Iso(\widetilde{T}_1,\widetilde{T}_2)\]
 in time $d^{O(\log d)} \cdot n^{O(1)} = 2^{2^{O(k^2)}} \cdot n^{O(1)}$.
 Observe that all the isomorphism sets required in \Cref{lem:lift-isomorphisms}\ref{item:lift-isomorphisms-5} have already been computed in the previous iteration.

 Finally, observe that $\CQ_{1,\ell} = \{V(T_1)\}$ and $\CQ_{2,\ell} = \{V(T_2)\}$.
 So in iteration $i = \ell$ the algorithm computes
 \[\Iso(\widehat{T}_1,\widehat{T}_2) = \Iso(T_1,T_2).\]

 To bound the running algorithm, one can observe that all steps can be performed in polynomial time except for the subroutine in \Cref{lem:lift-isomorphisms}.
 However, this subroutine is only called at most $\ell \cdot (|V(T_1)| + |V(T_2)|)^2$ many times, which is polynomial in the input size, and runs in time $2^{2^{O(k^2)}} \cdot n^{O(1)}$.
 Overall, this gives the desired running time.
\end{proof}

Finally, relying on well-established arguments, we remove the requirement for the input tournaments to be $2$-WL-homogeneous.
We reformulate our main algorithmic result (\Cref{thm:main}) in a slightly more precise manner.

\begin{theorem}
 \label{thm:isomorphism}
 There is an algorithm that, given tournaments $F$ and $T_1,T_2$, either concludes that $T_1$ is not $F$-free or computes $\Iso(T_1,T_2)$ in time $2^{2^{O(k^2)}} \cdot n^{O(1)}$, where $k \coloneqq |V(F)|$ and $n \coloneqq \max(|V(T_1)|,|V(T_2)|)$.
\end{theorem}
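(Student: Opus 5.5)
We reduce to \Cref{lem:isomorphism-unicolored} with the standard recursion from \cite{GroheN26}, which uses the color classes of $2$-WL on vertices together with \Cref{thm:gi-solvable-group}. Before that, we strengthen \Cref{lem:isomorphism-unicolored} slightly. It should also accept \emph{arc-colored} $2$-WL-homogeneous tournaments. This is harmless: \Cref{lem:main-structural} is already stated for arc-colored tournaments, and the proof of \Cref{lem:isomorphism-unicolored} never uses that the input is uncolored. So the plan is to prove the theorem for arc-colored tournaments $T_1,T_2$, by induction on $n$.

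First we run $2$-WL on the disjoint union of $T_1$ and $T_2$, or equivalently on both tournaments with a common color naming. If the multisets of vertex colors $\WL{2}{T_j}(v,v)$ differ, we return $\emptyset$. If $T_1$ is $2$-WL-homogeneous, then $T_2$ is too, and we call the arc-colored version of \Cref{lem:isomorphism-unicolored}. Otherwise, let $V_j^1,\dots,V_j^r$ be the vertex color classes of $T_j$, where $r\ge 2$ and the classes are matched across $j=1,2$ by color. For each class $i$ we form the arc-colored subtournament $T_j[V_j^i]$, colored by $\WL{2}{T_j}$. It is $F$-free, and by \Cref{fact:wl-partition} (applied to the diagonal-color partition) refining it by $2$-WL gives back a coloring compatible with the restriction. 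Since $|V_j^i|<n$, we recursively compute $\Iso(T_1[V_1^i],T_2[V_2^i])$ and $\Aut(T_1[V_1^i])$. If any of these sets is empty, we return $\emptyset$.

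Next we combine the pieces. Every isomorphism $T_1\cong T_2$ maps $V_1^i$ onto $V_2^i$, so $\Iso(T_1,T_2)\subseteq\Gamma\theta$. Here $\Gamma\coloneqq\prod_i\Aut(T_1[V_1^i])$ is a direct product of automorphism groups of tournaments, hence solvable by \Cref{thm:aut-solvable}, and $\theta$ is the union of one chosen isomorphism per class. Then \Cref{thm:gi-solvable-group} computes $\Iso_{\Gamma\theta}(T_1,T_2)=\Iso(T_1,T_2)$ in polynomial time. Whenever a recursive call reports ``not $F$-free'' for a subtournament of $T_1$, we propagate that answer, which is correct since $F$-freeness is hereditary. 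The recursion must only ever certify failure for $T_1$. To ensure this, when we call the homogeneous lemma with the roles of $T_1,T_2$ swapped to obtain $\Aut$, we only ever pass subtournaments of $T_1$ as the first argument.

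For the running time, the subproblems at each level of recursion have vertex sets forming a partition of $V(T_1)$. So the recursion tree has at most $O(n)$ leaves and depth at most $n$, giving polynomially many calls overall. Each call costs $2^{2^{O(k^2)}}n^{O(1)}$ for the homogeneous base case, or polynomial time for the group-theoretic combination step. This yields the claimed bound.

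The main obstacle is the bookkeeping that keeps the recursion honest. Recursive calls must receive the $2$-WL colors of the parent, so that classes are matched consistently between $T_1$ and $T_2$. The homogeneous case must tolerate arc colors. Since the base case is essentially verbatim from \cite{GroheN26}, I would cite their reduction and only check that $F$-freeness is preserved under taking induced subtournaments with colored arcs.
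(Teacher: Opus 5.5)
Your proposal is correct and matches the paper's proof: partition by $2$-WL vertex colors, handle each class with \Cref{lem:isomorphism-unicolored}, and compute $\Iso(T_1,T_2)$ inside the solvable coset $\Gamma\theta$ via \Cref{thm:gi-solvable-group}. The recursion, the arc-colored strengthening and the extra $\Aut$ calls are all unnecessary, for two reasons: each class $X$ already induces a $2$-WL-homogeneous tournament (the restriction of $\WL{2}{T_j}$ to $X \times X$ is stable, finer than $\WL{2}{T_j[X]}$, and has a single diagonal color), and $\Aut$ is already part of the coset representation of a nonempty $\Iso$. Note also that your partition-based count does not by itself bound the number of calls, because pairing every subproblem with a separate $\Aut$ call would double the calls at each level of a deep recursion; the bound holds because the recursion has depth one.
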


\begin{proof}
 We run $2$-WL on $T_1$ and $T_2$.
 If $2$-WL distinguishes between $T_1$ and $T_2$, then we return that $\Iso(T_1,T_2) = \emptyset$.
 So suppose that $T_1 \simeq_2 T_2$.

 Let $C_V \coloneqq \{\WL{2}{T_1}(v,v) \mid v \in V(T_1)\}$ denote the set of \emph{vertex colors} of the coloring $\WL{2}{T_1}$.
 For $c \in C_V$ and $j \in \{1,2\}$ we define
 \[X_{j,c} \coloneqq \{v \in V(T_j) \mid \WL{2}{T_j}(v,v) = c\}.\]
 For every $c \in C_V$ we use the algorithm from Lemma \ref{lem:isomorphism-unicolored} to compute the set
 \[\Gamma_c\theta_c \coloneqq \Iso(T_1[X_{1,c}],T_2[X_{2,c}]).\]
 If there is some $c \in C_V$ for which the algorithm from Lemma \ref{lem:isomorphism-unicolored} concludes that $T_1[X_{1,c}]$ is not $F$-free, then $T_1$ is also not $F$-free.

 Also, if there is a color $c \in C_V$ for which $\Iso(T_1[X_{1,c}],T_2[X_{2,c}]) = \emptyset$, then $\Iso(T_1,T_2) = \emptyset$.
 Otherwise, let
 \[\Gamma \coloneqq \bigtimes_{c \in C_V} \Gamma_c \leq \Sym(V(T_1))\]
 and $\theta\colon V(T_1) \to V(T_2)$ be the bijection defined via $\theta(v) \coloneqq \theta_c(v)$ for the unique $c \in C_V$ such that $v \in X_{1,c}$.
 Then
 \[\Iso(T_1,T_2) \subseteq \Gamma\theta.\]
 Observe that the group $\Gamma_c$ is solvable for every $c \in C_V$ by Theorem \ref{thm:aut-solvable}.
 Hence, $\Gamma$ is also solvable.
 So
 \[\Iso(T_1,T_2) = \Iso_{\Gamma\theta}(T_1,T_2)\]
 can be computed in polynomial time using Theorem \ref{thm:gi-solvable-group}.
\end{proof}

\section*{AI Statement}

The results in this work were obtained through a discussion with ChatGPT-6 Astra.
On an initial prompt, ChatGPT-6 Astra provided an XP algorithm (parameterized by the size of $F$) building on \Cref{lem:eigenspace-dimension}, but was unable to obtain an FPT algorithm.
The FPT algorithm was obtained in a subsequent discussion where the main additional ideas and high-level proof structure were provided by the author.
In particular, this included the use of the ``near-bounded-degree'' structure and partition hierarchy (inspired by \cite{GroheN26}) as well as the use the $2$-WL algorithm to avoid numerical computations (and accompanying bit-complexity issues) which were part of the initially proposed XP algorithm (inspired by \cite{Furer95}).

Finally, I stress that the write-up does not contain any AI-generated text, and is fully written by the author (based on the results of the discussion with ChatGPT-6 Astra).

\bibliographystyle{plainurl}
\bibliography{literature}

\end{document}